\documentclass[reqno,11pt]{amsart}

\newtheorem{corollary}{Corollary}
\newtheorem{definition}{Definition}

\usepackage{graphicx}
\usepackage{xcolor}
\usepackage{tikz}
\usepackage{tikz-cd}
\usepackage{caption}
\usepackage{booktabs}
\usepackage{adjustbox}
\usepackage{longtable}
\usepackage{array}
\usepackage{amssymb}

\usepackage{amsmath}

\newcommand{\be}{\begin{equation}}
\newcommand{\ee}{\end{equation}}

\DeclareMathSymbol{\Lambda}{\mathord}{operators}{"03}

\usepackage{enumerate}
\usepackage{enumitem}

\usepackage{arydshln}

\usepackage{appendix}

\newtheorem{thm}{Theorem}[section]
\newtheorem{prop}[thm]{Proposition}
\newtheorem{theorem}[thm]{Theorem}

\newtheorem{lemma}[thm]{Lemma}

\newtheorem{remark}[thm]{\it Remark}

\usepackage{tikz-3dplot}
\usepackage{xifthen}
\usepackage{caption}

\newcommand{\cp}{\mathbb C\mathbb P^1}

\tdplotsetmaincoords{60}{125}

\usepackage[
labelfont=sf,
hypcap=false,
format=hang
]{caption}

\usepackage{mathtools}

\begin{document}

\title{Multicomponent pentagon maps}

\author[P. Kassotakis]{Pavlos Kassotakis}\footnote{Corresponding author: P. Kassotakis}
\address{Pavlos Kassotakis, Department of Mathematics, University of Patras, 26 504 Patras, Greece}
 \email{pkassotakis@math.upatras.gr}

\keywords{Pentagon equation, pentagon maps, $n-$ary operations, parametric pentagon maps, discrete integrable systems}

\date{\today}

\begin{abstract}
We provide necessary and sufficient conditions for maps that satisfy  associative-like conditions on families of n-ary magmas to be pentagon maps. We obtain parametric-pentagon maps and we propose a procedure that generates families of multicomponent pentagon and entwining pentagon maps from a given  pentagon map.    
\end{abstract}

\maketitle

\section{Introduction}
The pentagon equation reads
\begin{align}\label{pentagon}
  S_{12}S_{13}S_{23}= & S_{23}S_{12},
\end{align}
where  $S$ may denote a linear operator or a map. When $S$ is a linear operator, we have the operator version of (\ref{pentagon}) and the indices denote the vector spaces where there is non-trivial action. When $S$ is a map, we have the set-theoretic version of (\ref{pentagon}) and the indices denote the components of a threefold cartesian product on which the map acts nontrivially.  Solutions of the set-theoretic version are referred to as set-theoretical solutions of the pentagon equation or simply pentagon maps.

 The pentagon equation first appeared in the theory of angular momentum as an identity satisfied by the Racah coefficients \cite{Biedenharn:1953,Elliott:1953}.   
Over the years, the pentagon equation  has been shown to be related to several mathematical structures, including quasi-Hopf algebras \cite{Drinfeld_p}, conformal field theory  \cite{MoorSeib:89}, geometric topology \cite{Korepanov:2000}, incidence geometry \cite{Doliwa:2014p}, functional analysis \cite{Faddeev:1994}, and  integrable systems \cite{Maillet:1990,Kaufmann_1993,Doliwa:2020}. Systematic studies on set-theoretical solutions of the pentagon equation  were  carried out in \cite{Kashaev:1998}. For a survey of set-theoretical solutions we refer the reader to \cite{Mazzotta:2023}, as well as to \cite{Mazzotta_2023,Mazzotta2025} for various types of set-theoretic solutions. Additional developments, including the study of combinatorial structures underlying the pentagon equation, can be found in \cite{Dimakis:2015,Hoissen2024}, while for developments on linear problems associated to pentagon maps we refer to \cite{Hoissen2024,Kassotakis:2023_p}. For classification results see \cite{Catino:2020,Colazzo:2020,Kassotakis_2024}, and for set-theoretic solutions of higher simplex equations which arise from  pentagon maps see   \cite{Kashaev:1998,Dimakis_Korepanov:2021,KASSOTAKIS224,Mihalache2026}.

In this article we focus on the set-theoretic version of the pentagon equation. Nevertheless, many of the subsequent results can be translated to the operator version of  (\ref{pentagon}). Moreover, all of the results  can be formulated in terms of  the reverse pentagon equation or the braid pentagon equations which  are equivalent  to (\ref{pentagon}) and are given respectively by
\begin{align*}
S_{23}S_{13}S_{12}= & S_{12}S_{23},\\
S_{12}S_{23}S_{12}= & S_{23}\tau_{12}S_{23},\\
S_{23}S_{12}S_{23}= & S_{12}\tau_{23}S_{12},
\end{align*}
where $\tau$ permutes the components, which are indicated by the indices, of the threefold cartesian product.

In Section \ref{sec2}, we revisit the connection of pentagon maps with incidence geometry and we provide necessary and sufficient conditions for maps which are equivalent to  associative-like conditions on families of partial  magmas to be pentagon maps. As an example, from an affine family of binary operations we recover a family of pentagon maps parameterized by $\alpha>0\in\mathbb{R}$ which was introduced in \cite{Kashaev:1998}. Furthermore, we prove the Liouville integrability of this family of pentagon maps and we present associated  families of tetrahedron and hexagon maps. 
In Section \ref{sec3_0} we generalize the results to the case of $n-$ary partial magmas. That is we provide necessary and sufficient conditions for maps which are equivalent to  associative-like conditions on families of $n-$ary partial magmas to be pentagon maps. In the ternary case, as an example,  we generalize the example of  Section \ref{sec2} and we obtain two families of two-component pentagon maps. Also we propose the notion of {\em parametric pentagon maps} and we provide explicitly families of  such pentagon maps. Finally in Section \ref{sec3}, we propose  a construction that produces families of multicomponent maps from a single given map. When the  given map is a pentagon map, we obtain multicomponent pentagon and entwining pentagon maps.

\section{Binary operations and  pentagon maps}\label{sec2}
Let $\mathcal{X}$ be a set. We have the following definitions.

\begin{definition}
A map $S:\mathcal{X}\times \mathcal{X}\rightarrow \mathcal{X}\times \mathcal{X}$ is called   {\em pentagon map}
if it is a solution to the set-theoretic version of the  pentagon equation (\ref{pentagon}), where  $S_{ij},$ $i< j\in \{1,2,3\},$ denote the maps that act as $S$ on the $i^{th},$ and on the $j^{th},$ components of $\mathcal{X}\times\mathcal{X}\times\mathcal{X}$ and as identity to the remaining one.
\end{definition}

\begin{definition}\label{def0}
We say that a  map $S:\mathcal{X}\times \mathcal{X}\rightarrow \mathcal{X}\times \mathcal{X},$ $S:(x,y)\mapsto (u,v),$ is equivalent to the associativity condition
\begin{align}\label{associativity22}
\left<a\left<bc\right>_y\right>_x=& \left<\left<ab\right>_uc\right>_v,
\end{align}
for a family of binary operations $\left<\cdot\right>_x:\mathcal{I}\times \mathcal{I}\rightarrow \mathcal{I},$ parameterized by $x\in\mathcal{X}$, if  the associativity condition (\ref{associativity22}) implies the map $S$.
\end{definition}

In \cite{Hoissen2024} it was shown that some pentagon maps are equivalent  to the associativity condition of some specific binary operations. For example, it was shown that the pentagon map $S_I:\mathbb{CP}^1\times \mathbb{CP}^1\rightarrow \mathbb{CP}^1\times \mathbb{CP}^1$ that reads
\begin{align*}
S_I: (x,y)\mapsto (u,v)=\left(\frac{x}{x+y-xy},x+y-xy\right),
\end{align*}    
is equivalent to the associativity condition (\ref{associativity22})
where the family of binary operations  is defined by 
\begin{align}\label{al_quandle}
\left<\cdot\right>_x: (a,b)\mapsto & \left<ab\right>_x:= x a+(1-x) b.
\end{align}
The binary operation represents the
collinearity of three points $a$, $b$, $\left<ab\right>_x$ and the associativity condition serves as a consistency relation on the Menelaus configuration, see Figure  \ref{Menelaus}. Then the pentagon equation reads as a consistency condition on the  Desargues configuration $(10_3)$ that contains five Menelaus configurations, see Figure \ref{des} (cf. \cite{Doliwa:2014p}). 
%



%
\begin{figure}
\begin{minipage}[htb]{0.45\textwidth}
\begin{tikzpicture}[dot/.style={circle,inner sep=1pt,fill,label={#1},name=#1},
  extended line/.style={shorten >=-#1,shorten <=-#1},
  extended line/.default=1cm,scale=0.6]

\coordinate (aa) at (0,0);  
\coordinate (bb) at (2.75,-0.5);
\coordinate (cc) at (6,-1); 
\coordinate (d) at (2,2.5);
\coordinate (e) at (3,1);
\coordinate (f) at (4,5); 

\draw[extended line=0.5cm,blue] (aa)--(cc); 
\draw[extended line=0.5cm] (aa)--(f); 
\draw[extended line=0.5cm] (cc)--(d); 
\draw[extended line=0.5cm,blue] (f)--(bb);

\node [dot=b] at (aa) {};
\node [dot=a] at (f) {};
\node [dot=c] at (cc) {};
\node [label={[label distance=0.55cm]90:$\left<ab\right>_u$}]  at (d) {}; \filldraw[gray] (d) circle (2pt);
\node [label={[label distance=0.05cm]330:$\left<bc\right>_y$}]  at (bb) {}; \filldraw[gray] (bb) circle (2pt);
\node [label={[label distance=0.25cm]5:$\left<a\left<bc\right>_y\right>_x=\left<\left<ab\right>_uc\right>_v$}]  at (e) {}; \filldraw[red] (3.2,1.45) circle (3pt);
\end{tikzpicture}
\captionsetup{font=footnotesize}
\captionof*{figure}{(a) }
\end{minipage}\;\;\;\;
\begin{minipage}[htb]{0.45\textwidth}
\begin{tikzpicture}[dot/.style={circle,inner sep=1pt,fill,label={#1},name=#1},
  extended line/.style={shorten >=-#1,shorten <=-#1},
  extended line/.default=1cm,scale=1.2]

\coordinate (A) at (-2,0);
\coordinate (B) at (2,0);
\coordinate (C) at (0,{sqrt(12)});

\coordinate (ab) at ($(A)!0.5!(B)$); \coordinate (ac) at ($(A)!0.5!(C)$); \coordinate (bc) at ($(C)!0.5!(B)$);
\
\draw[color=blue] (A) -- (B);\draw[] (B) -- (C); \draw[] (C) -- (A);
\
\node [dot=c]  at (ac) {}; \node [dot=a]  at (B) {};
  \node [dot=b]  at (bc) {};
  
  \node [dot,red] at (A) {};  \node [dot,gray] at (C) {};  \node [dot,gray] at (ab) {};

\draw[ very thin,color=blue]
  plot[smooth cycle, tension=1.6]
  coordinates {(ab) (ac) (bc)};
  \node [ label={[label distance=-1.0cm, below]0:$\left<a\left<bc\right>_y\right>_x= \left<\left<ab\right>_uc\right>_v$}] at (A) {};\node [dot] at (B) {}; \node [label={[label distance=0.0cm]0:$\left<ab\right>_u$}] at (C) {}; \node [label={[label distance=0.0cm,below]0:$\left<bc\right>_y$}] at (ab) {};

\end{tikzpicture}
\captionsetup{font=footnotesize}
\captionof*{figure}{(b)  }
\end{minipage}
\caption{ (a) The associativity condition (\ref{associativity22}) is represented as a consistency relation on the Veblen (Menelaus) configuration $(6_2,4_3).$ The points $a,b,c$ (black circles) uniquely define the points (grey circles) $\left<ab\right>_u$ and $\left<bc\right>_y$. Then there are two ways to obtain the point represented by the red circle. The consistency occurs when (\ref{associativity22}) holds.    The  map $(x,y)\mapsto (u,v)$ maps the blue lines  to the black lines of the configuration. (b) The associativity condition (\ref{associativity22})  represented as a consistency relation on the Menelaus configuration where one of its lines is represented by a circle.}\label{Menelaus}

\end{figure}

\tdplotsetmaincoords{70}{200}
\begin{figure}[htb]
\begin{center}
\begin{minipage}{0.48\textwidth}
\begin{tikzpicture}[tdplot_main_coords,dot/.style={circle,inner sep=1.2pt,fill,label={#1},name=#1},
  extended line/.style={shorten >=-#1,shorten <=-#1},
  extended line/.default=1cm,scale=5]

\coordinate (A) at (1,0,0);
\coordinate (B) at (0,1,0);
\coordinate (C) at (0,0,1);
\coordinate (D) at (0,0,0);

\coordinate (ab) at ($(A)!0.5!(B)$); \coordinate (ac) at ($(A)!0.5!(C)$); \coordinate (bc) at ($(C)!0.5!(B)$);
\coordinate (da) at ($(D)!0.5!(A)$);\coordinate (db) at ($(D)!0.5!(B)$);\coordinate (dc) at ($(D)!0.5!(C)$);
\draw[] (A) -- (B) -- (C) -- cycle;
\draw[densely dashed, very thin] (D)--(A);\draw[] (D)--(B);\draw[] (D)--(C);
\draw[] (A)--(B);
\draw[color=blue] (A)--(C);


\draw[densely dashed, very thin]
  plot[smooth cycle, tension=1.6]
  coordinates {(dc) (da) (ac)};
  
  \draw[densely dashed, very thin]
  plot[smooth cycle, tension=1.6]
  coordinates {(ab) (da) (db)};

  \draw[very thin]
  plot[smooth cycle, tension=1.6]
  coordinates {(dc) (db) (bc)};

\draw[ very thin,shorten >=2pt, shorten <=2pt]
  plot[smooth cycle, tension=1.6]
  coordinates {(ab) (ac) (bc)};

\node [dot] at (dc) {}; \node [dot] at (da) {}; \node [dot,color=blue] at (ac) {};
\node [dot] at (ab) {}; \node [dot,color=green] at (db) {}; \node [dot] at (bc) {};
\node [dot] at (A) {}; \node [dot] at (B) {}; \node [dot] at (C) {}; \node [dot] at (D) {};

\end{tikzpicture}
\captionsetup{font=footnotesize}
\captionof*{figure}{(a)  }
\end{minipage}\hfill
\begin{minipage}{0.48\textwidth}
\begin{tikzpicture}[tdplot_main_coords,dot/.style={circle,inner sep=1.2pt,fill,label={#1},name=#1},
  extended line/.style={shorten >=-#1,shorten <=-#1},
  extended line/.default=1cm,scale=5]

\coordinate (A) at (1,0,0);
\coordinate (B) at (0,1,0);
\coordinate (C) at (0,0,1);
\coordinate (D) at (0,0,0);

\coordinate (ab) at ($(A)!0.5!(B)$); \coordinate (ac) at ($(A)!0.5!(C)$); \coordinate (bc) at ($(C)!0.5!(B)$);
\coordinate (da) at ($(D)!0.5!(A)$);\coordinate (db) at ($(D)!0.5!(B)$);\coordinate (dc) at ($(D)!0.5!(C)$);
\draw[] (A) -- (B) -- (C) -- cycle;
\draw[densely dashed, very thin] (D)--(A);\draw[] (D)--(B);\draw[] (D)--(C);
\draw[] (A)--(B);


\draw[densely dashed, very thin]
  plot[smooth cycle, tension=1.6]
  coordinates {(dc) (da) (ac)};
  
  \draw[densely dashed, very thin]
  plot[smooth cycle, tension=1.6]
  coordinates {(ab) (da) (db)};

  \draw[very thin]
  plot[smooth cycle, tension=1.6]
  coordinates {(dc) (db) (bc)};

\draw[ very thin,shorten >=2pt, shorten <=2pt]
  plot[smooth cycle, tension=1.6]
  coordinates {(ab) (ac) (bc)};

\node [dot, label={[label distance=0.0cm, right]0:$d$} ] at (dc) {}; \node [dot,color=gray, label={[label distance=0.0cm, above]0:$\left<bc\right>_y$}] at (da) {}; \node [dot,color=gray,label={[label distance=-0.2cm, left]0:$\left<b\left<cd\right>_z\right>_{y'}$}] at (ac) {};
\node [dot,label={[label distance=-0.2cm, below]0:$b$}] at (ab) {}; \node [dot,label={[label distance=0.0cm]0:$c$}] at (db) {}; \node [dot,color=gray,label={[label distance=-0.1cm, left]0:$\left<cd\right>_z$}] at (bc) {};
\node [dot,label={[label distance=0.0cm, left]0:$a$}] at (A) {}; \node [dot,color=gray,label={[label distance=0.0cm, below]0:$\left<ab\right>_u$}] at (B) {}; \node [dot,color=red,label={[label distance=0.0cm,above]0:$\left<\left<\left<ab\right>_uc\right>_vd\right>_w=\left<\left<ab\right>_u\left<cd\right>_z\right>_{v'}=
\left<a\left<b\left<cd\right>_z\right>_{y'}\right>_{x}$}] at (C) {}; \node [dot,color=gray,label={[label distance=0.1cm]0:$\left<\left<ab\right>_u c\right>_v$}] at (D) {};

\end{tikzpicture}
\captionsetup{font=footnotesize}
\captionof*{figure}{(b)  }

\end{minipage}

\caption{(a)  The Desargues configuration $(10_3)$  drawn on a tetrahedron. It consists of five Menelaus configurations, one configuration on each face of the tetrahedron and the fifth Menelaus configuration is made by the four circles and the corresponding six points. Three out of five Menelaus configurations share the green point, while the remaining two share the blue line; a manifestation of Pachner $3-2$ move \cite{Pachner:1991}. (b) The pentagon equation represented as a consistency relation on the Desargues configuration. From the points $a,b,c,d$ (black circles) we obtain via the binary operation the points represented by the grey circles. 
 Then the point represented by the red circle is obtained in three different ways. A manifestation of the consistency.}\label{des}

 \end{center}
\end{figure}

The pentagon map $S_I$ was firstly introduced in \cite{Kashaev:1999} inside the context of quantum dilogarithm.
It  also results from the evolution of matrix KP solitons \cite{Dimakis:2018}, or  as a reduction of the so-called {\em normalization map} \cite{Doliwa:2014p,Doliwa:2020}. Moreover, it serves as the top member of the classification list (S-list) of quadrirational pentagon maps \cite{Kassotakis_2024}. 
 In detail, in \cite{Kassotakis_2024} it was proven that any quadrirational \cite{Etingof_1999} pentagon map $S:\cp\times \cp\rightarrow\cp\times\cp$, with $S:(x,y)\mapsto (u,v)$ is $M\ddot{o}b$ equivalent to exactly one
of the following  maps:
\begin{align}\label{S-list}
\begin{aligned}
u=& \frac{x}{x+y-x y }, & v=&x+y-x y, & & &(S_I)\\
u=&x ,& v=&x+y-\delta x y, & & &(S_{II}^\delta)\\
u=&\frac{x}{y}, & v=&y, & & &(S_{III}) \\
u=&x-y, & v=&y, & & &(S_{IV}) 
\end{aligned}
\end{align}
where $\delta=0,1$. Moreover the  associativity
condition (\ref{associativity22}) for the families of binary operations defined by
\begin{align}\label{binary_op}\begin{aligned}
 \left<ab\right>_x^I&:= x\,a+(1-x)\,b, & \left<ab\right>_x^{II}&:= a+(1-\delta
x)\,b,\\
\left<ab\right>_x^{III} &:= x\,a+b, & \left<ab\right>_x^{IV} &:= e^x\,a+b,
\end{aligned}
\end{align}
delivers the respective maps $S_I, S_{II}^{\delta}, S_{III}$ and $S_{IV}.$ 

To introduce the main theorem of this Section, we will need the following definition.
\begin{definition}
A non-empty set $\mathcal{I}$ with a binary operation $\left<\cdot\right>:D\rightarrow \mathcal{I},$ $D\subseteq \mathcal{I}\times \mathcal{I},$ is called partial magma and it is denoted as $(\mathcal{I},\left<\cdot\right>).$ 
\end{definition}

 \begin{theorem}\label{theo01}
Let $\mathcal{I},\mathcal{X}$ be non-empty sets. Let $(\mathcal{I},\left<\cdot\right>_{x})_{x\in \mathcal{X}}$ be a  family of partial  magmas. The map $S:\mathcal{X}\times \mathcal{X}\rightarrow \mathcal{X}\times \mathcal{X},$ $S:(x,y)\mapsto (u,v),$  that satisfies the associativity condition (\ref{associativity22}),  is a pentagon map iff for $a\neq b\neq c\neq d\neq a,$ the equation
\begin{align}\label{refac_1}
  \left<\left<\left<ab\right>_{x'}c\right>_{y'}d\right>_{z'}=\left<\left<\left<ab\right>_{x}c\right>_{y}d\right>_{z},
\end{align}
implies $x'=x,$  $y'=y,$ $z'=z.$
\end{theorem}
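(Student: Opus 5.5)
The plan is to compute the fourfold left-bracketed product of four elements $a,b,c,d$ in two ways, one for each side of the pentagon equation (\ref{pentagon}), and then use the uniqueness hypothesis (\ref{refac_1}) to compare the parameters. Take $(x,y,z)\in\mathcal{X}^3$ and form the right-nested product $\left<a\left<b\left<cd\right>_z\right>_y\right>_x$. I will move the brackets to the left in two different orders, using (\ref{associativity22}) at each step. Every application of (\ref{associativity22}) to a subexpression of the form $\left<p\left<qr\right>_{s}\right>_{t}$ replaces it by $\left<\left<pq\right>_{t'}r\right>_{s'}$ with $(t',s')=S(t,s)$; it therefore acts on exactly two of the three parameter slots.

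\emph{First route (three moves).} Start with the inner bracket $\left<b\left<cd\right>_z\right>_y=\left<\left<bc\right>_{y_1}d\right>_{z_1}$, where $(y_1,z_1)=S(y,z)$. The whole product becomes $\left<a\left<\left<bc\right>_{y_1}d\right>_{z_1}\right>_x$. Apply (\ref{associativity22}) at the outer level with $q=\left<bc\right>_{y_1}$ to obtain $\left<\left<a\left<bc\right>_{y_1}\right>_{x_1}d\right>_{z_2}$, where $(x_1,z_2)=S(x,z_1)$. Then apply it once more to the inner part, giving $\left<\left<\left<ab\right>_{x_2}c\right>_{y_2}d\right>_{z_2}$ with $(x_2,y_2)=S(x_1,y_1)$. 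Keeping track of the slots, the triple $(x_2,y_2,z_2)$ is exactly $S_{12}S_{13}S_{23}(x,y,z)$.

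\emph{Second route (two moves).} Treat $\left<cd\right>_z$ as a single element and apply (\ref{associativity22}) to $\left<a\left<b\,\left<cd\right>_z\right>_y\right>_x$. This gives $\left<\left<ab\right>_{x_3}\left<cd\right>_z\right>_{y_3}$ with $(x_3,y_3)=S(x,y)$. Now treat $\left<ab\right>_{x_3}$ as a single element and apply (\ref{associativity22}) again. The result is $\left<\left<\left<ab\right>_{x_3}c\right>_{y_4}d\right>_{z_4}$ with $(y_4,z_4)=S(y_3,z)$, so the triple is $S_{23}S_{12}(x,y,z)$. Both routes compute the same element of $\mathcal{I}$, so (\ref{refac_1}) holds with these two parameter triples.

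Sufficiency now follows directly: if (\ref{refac_1}) forces equal parameters, then $S_{12}S_{13}S_{23}=S_{23}S_{12}$ on $(x,y,z)$. Since $(x,y,z)$ was arbitrary, $S$ is a pentagon map. For necessity, suppose $S$ is pentagon but (\ref{refac_1}) holds with distinct triples $(x',y',z')\neq(x,y,z)$. Because $S$ is determined by (\ref{associativity22}), I would transport both left-bracketed products back to right-bracketed form using the inverse reading of (\ref{associativity22}). Then I would argue that the defining property of $S$ (Definition \ref{def0}, where the associativity condition implies the map) forces uniqueness of the parameters, which gives a contradiction.

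The main obstacle is the direction ``pentagon $\Rightarrow$ uniqueness''. The hypothesis only says that (\ref{associativity22}) determines $(u,v)$ from $(x,y)$, so I would need to show that coincidence of two fourfold products peels off one parameter at a time. Concretely, I would first try to extract $z'=z$ by viewing both sides as $\left<P\,d\right>_{z}$ with $d$ generic and distinct from $a,b,c$, and then recurse to the threefold level. That recursion essentially reinterprets (\ref{refac_1}) as a characterization and not as a derived fact, and I expect this to require a careful reading of the ``iff'' in terms of well-definedness on the partial domains $D$. I also have to check that the partial-domain issues and the genericity condition $a\neq b\neq c\neq d\neq a$ are respected throughout.
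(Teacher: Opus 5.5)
Your \emph{if} direction is correct and is the paper's argument. Your three-move and two-move routes are exactly (\ref{lhs_pp}) and (\ref{rhs_pp}) (Figure \ref{fig_lr}). They produce the triples $S_{12}S_{13}S_{23}(x,y,z)$ and $S_{23}S_{12}(x,y,z)$, which realize the same fourfold product, so the uniqueness hypothesis forces the pentagon equation.

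The \emph{only if} direction is a genuine gap, and it cannot be filled because it is false as stated.

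\begin{itemize}
\item Your peeling idea (get $z'=z$ from $\left<P'd\right>_{z'}=\left<Pd\right>_{z}$, then recurse) needs an injectivity property of the operations themselves, such as the bi-injectivity hypothesis of Corollary \ref{cor1}. The pentagon property of $S$ cannot supply this. It only relates the two particular triples obtained by going around the pentagon, not arbitrary triples that happen to give equal elements of $\mathcal{I}$.
\item A counterexample: let $\mathcal{I}=\mathbb{R}$, $|\mathcal{X}|\geq 2$, $\left<ab\right>_x:=a+b$ for all $x$, and $S=\mathrm{id}$. Then $S$ satisfies (\ref{associativity22}) and is a pentagon map, but both sides of (\ref{refac_1}) equal $a+b+c+d$ for every choice of parameters.
\item Invoking Definition \ref{def0} (that the associativity condition determines $S$) does not rescue it, and it is not a hypothesis of the statement anyway. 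Take the free parametrized magma on at least four generators. Impose (\ref{associativity22}) with a given pentagon map $S$ on words of length three, and collapse all words of length at least four to one absorbing element. Distinct left-bracketed triple words stay distinct, so (\ref{associativity22}) determines $S$. Yet every fourfold product equals the absorbing element.
\end{itemize}

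The paper's own proof has the same hole. For this direction it only observes that (\ref{lhs_pp}) and (\ref{rhs_pp}) coincide when $S$ is pentagon, which is not the uniqueness claim (\ref{rfp}).

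What is true is the \emph{if} direction, and that is all Corollary \ref{cor1} and Proposition \ref{prop01} use. An \emph{iff} survives only if you restrict to the two specific triples in (\ref{refa1}), and there it is merely a restatement of the pentagon equation.
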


\begin{proof}
Let $a,b,c\in \mathcal{I},$ $x,y,u,v\in \mathcal{X}$ and let the map 
\begin{gather*}
S:\mathcal{X}\times \mathcal{X}\ni (x,y)\mapsto (u,v)=\left(f(x,y),g(x,y)\right)\in \mathcal{X}\times \mathcal{X},
\end{gather*}
 satisfy the associativity condition  
\begin{align}\label{assoc_3}
\left<a\left<bc\right>_y\right>_x=& \left<\left<ab\right>_uc\right>_v.
\end{align}
We embed $S$ in $\mathcal{X}\times \mathcal{X}\times \mathcal{X}$, and we have the maps
\begin{align*}
S_{12}:(x,y,z)\mapsto & (\hat x,\tilde y,z)=\left(f(x,y),g(x,y),z\right),\\
S_{13}:(x,y,z)\mapsto & (\bar x,y,\tilde z)=\left(f(x,z),y,g(x,z)\right),\\
S_{23}:(x,y,z)\mapsto & (x,\bar y,\hat z)=\left(x,f(y,z),g(y,z)\right).
\end{align*}
Considering now the following points of $\mathcal{I}:$ $\left<a\left<b\left<cd\right>_z\right>_y\right>_x,$  $\left<a\left< \left<bc\right>_y d\right>_z\right>_x,$ 
$\left< \left<a\left<bc\right>_y\right>_x d\right>_z.$
By using (\ref{assoc_3}), clearly the maps $S_{23}$, $S_{13}$, $S_{12},$ respectively satisfy
  \begin{align}
  \begin{aligned}\label{as1_1}
  \left<a\left<b\left<cd\right>_z\right>_y\right>_x=&\left<a\left< \left<bc\right>_{\bar y} d\right>_{\hat z}\right>_x,\\ 
   \left<a\left< \left<bc\right>_y d\right>_z\right>_x=&\left< \left<a\left<bc\right>_y\right>_{\bar x} d\right>_{\tilde z},\\ 
   \left< \left<a\left<bc\right>_y\right>_x d\right>_z=&\left<\left<\left<ab\right>_{\hat x}c\right>_{\tilde y}d\right>_z.
   \end{aligned}
  \end{align}
 So $S_{13}S_{23}:(x,y,z)\mapsto  (\bar x,\bar y,\tilde {\hat z}),$   satisfies
 \begin{align}  \label{lhsp2}
 \left<a\left<b\left<cd\right>_z\right>_y\right>_x=&\left< \left<a\left<bc\right>_{\bar y}\right>_{\bar x} d\right>_{\tilde{ \hat z}},
 \end{align}
 and 
 \begin{align}\label{plhs}
 S_{12}S_{13}S_{23}:(x,y,z)\mapsto  (\hat{\bar x},\tilde{\bar y},\tilde {\hat z}),
 \end{align}
  satisfies
 \begin{align}\label{lhs_pp}
 \left<a\left<b\left<cd\right>_z\right>_y\right>_x=&\left<\left<\left<ab\right>_{\hat{\bar x}}c\right>_{\tilde{\bar y}}d\right>_{\tilde {\hat z}}.
 \end{align}
Now if we consider the points of $\mathcal{I}:$ $\left<a\left<b\left<cd\right>_z\right>_y\right>_x,$ $\left<\left<ab\right>_x\left<cd\right>_z\right>_y$ 
From (\ref{assoc_3}),  the maps $S_{12}$, $S_{23}$, respectively satisfy
  \begin{align}
  \begin{aligned} \label{as1_2}
 \left<a\left<b\left<cd\right>_z\right>_y\right>_x=&\left<\left<ab\right>_{\hat x}\left<cd\right>_z\right>_{\tilde y},\\
 \left<\left<ab\right>_x\left<cd\right>_z\right>_y=&\left<\left<\left<ab\right>_{x}c\right>_{\bar y}d\right>_{\hat z}.
 \end{aligned}
  \end{align}
 So 
 \begin{align}\label{prhs}
 S_{23}S_{12}:(x,y,z)\mapsto (\hat x,\bar{\tilde y},\hat z)
 \end{align}
  satisfies
 \begin{align} \label{rhs_pp}
 \left<a\left<b\left<cd\right>_z\right>_y\right>_x=&\left<\left<\left<ab\right>_{\hat x}c\right>_{\bar{\tilde y}}d\right>_{\hat z}.
 \end{align}
 \begin{figure}[htb]
 \begin{tikzcd}
  {} & \left<\left<\left<ab\right>_{ x}c\right>_{y}d\right>_{z} \arrow[dl, |->, "S_{12}"] \arrow[dr, |->, "S_{23}"] & {}\\
  {\begin{aligned}
  \left<\left<\left<ab\right>_{\hat x}c\right>_{\tilde y}d\right>_{z}\\
  =\left< \left<a\left<bc\right>_y\right>_x d\right>_z 
  \end{aligned} \arrow[d, |->, "S_{13}"]}& {}&  {\begin{aligned}\left<\left<\left<ab\right>_{ x}c\right>_{\bar y}d\right>_{\hat z} \\ =\left<\left<ab\right>_x\left<cd\right>_z\right>_y\end{aligned} \arrow[d, |->, "S_{12}"]}  \\
  {\begin{aligned}
  \left<\left<\left<ab\right>_{\hat{\bar x}}c\right>_{\tilde y}d\right>_{\tilde z}\\
  =\left< \left<a\left<bc\right>_y\right>_{\bar x} d\right>_{\tilde z}\\
  {} =\left<a\left< \left<bc\right>_y d\right>_z\right>_x
  \end{aligned}} \arrow[dr, |->, "S_{23}"]  &{}  {}&{\begin{aligned}\left<\left<\left<ab\right>_{\hat x}c\right>_{\bar{\tilde y}}d\right>_{\hat z}\\
  =\left<\left<ab\right>_{\hat x}\left<cd\right>_z\right>_{\tilde y}\\
  {}=\left<a\left<b\left<cd\right>_z\right>_y\right>_x\end{aligned}}\\
  {} &{\begin{aligned}
  &\left<\left<\left<ab\right>_{\hat{\bar x}}c\right>_{\tilde{\bar y}}d\right>_{\tilde{\hat z}}\\
  =&\left< \left<a\left<bc\right>_{\bar y}\right>_{\bar x} d\right>_{\tilde{\hat z}}\\
  =&\left<a\left< \left<bc\right>_{\bar y} d\right>_{\hat z}\right>_x\\
  =&\left<a\left<b\left<cd\right>_z\right>_y\right>_x
  \end{aligned}}  & {}
 \end{tikzcd}
 \caption{The chain of maps $S_{12}S_{13}S_{23}$ and $S_{23}S_{12}$ acting on the point $\left<\left<\left<ab\right>_{ x}c\right>_{y}d\right>_{z}\in \mathcal{I}.$ The map $S$ satisfies the associativity condition (\ref{associativity22}). } \label{fig_lr}
 \end{figure}

 \begin{figure}[htb]
 \begin{tikzcd}
  {} & \left<\left<\left<ab\right>_{ x}c\right>_{y}d\right>_{z} \arrow[dl, |->, "S_{12}"] \arrow[dr, |->, "S_{23}"] & {}\\
  {\left< \left<a\left<bc\right>_y\right>_x d\right>_z 
   \arrow[d, |->, "S_{13}"]}& {}&  { \left<\left<ab\right>_x\left<cd\right>_z\right>_y  \arrow[d, |->, "S_{12}"]}  \\
  {\left<a\left< \left<bc\right>_y d\right>_z\right>_x
  } \arrow[rr, |->, "S_{23}"]  &{}  {}&{   \left<a\left<b\left<cd\right>_z\right>_y\right>_x}\\
  {} &{}  & {}
 \end{tikzcd}
 \caption{The diagram of Figure \ref{fig_lr} when $S$ is a pentagon map. The pentagonal identity.} \label{fig_pe}
 \end{figure}

 On the one hand, if $S$ is a pentagon map so $S_{12}S_{13}S_{23}=S_{23}S_{12},$ from (\ref{plhs}),(\ref{prhs})   we have 
 \begin{align*}
   (\hat{\bar x},\tilde{\bar y},\tilde {\hat z})= & (\hat x,\bar{\tilde y},\hat z),
 \end{align*}
 and equation (\ref{lhs_pp}) coincides with (\ref{rhs_pp}) (see also Figure \ref{fig_pe}). On the other hand (see Figure \ref{fig_lr}), in order $S$ to be a pentagon map, from (\ref{lhs_pp}), (\ref{rhs_pp}) there should be that the equation
 \begin{align}\label{refa1}
 \left<\left<\left<ab\right>_{\hat{ \bar x}}c\right>_{\tilde{ \bar y}}d\right>_{\tilde{ \hat z}}=\left<\left<\left<ab\right>_{\hat x}c\right>_{\bar{\tilde y}}d\right>_{\hat z},
  \end{align}
 should imply 
\begin{align*}
  (\hat{\bar x},\tilde{\bar y},\tilde {\hat z})= & (\hat x,\bar{\tilde y},\hat z). 
\end{align*}
So the condition
\begin{align} \label{rfp}
\left<\left<\left<ab\right>_{x'}c\right>_{y'}d\right>_{z'}=&\left<\left<\left<ab\right>_{x}c\right>_{y}d\right>_{z}&&\implies& x'=&x,& y'=&y,&z'=&z,
\end{align}
is equivalent to the pentagon equation
 and that completes the proof.

\end{proof}
Theorem \ref{theo01}, provides the necessary and sufficient conditions for a map that satisfies (\ref{associativity22}) to be a pentagon map. In this respect, (\ref{rfp}) serves as a translation of the so-called {\em three-factorization property} \cite{Kouloukas2009} (associated with Yang-Baxter maps) and the {\em six-factorization property} \cite{Rizos:2022} (associated with tetrahedron maps) to the pentagon setting.

We have the following Corollary. 


\begin{corollary}\label{cor1}
Let $\mathcal{I},\mathcal{X}$ be non-empty sets. Let $(\mathcal{I},\left<\cdot\right>)_{x\in \mathcal{X}}$ be a family of partial magmas satisfying the bi-injectivity property:
\begin{align*}
\left<a'b\right>_{x'}=\left<ab\right>_x \implies &\; x'=x,\; a'=a.
\end{align*}
 Then a map $S:\mathcal{X}\times \mathcal{X}\rightarrow \mathcal{X}\times \mathcal{X},$ $S:(x,y)\mapsto (u,v),$  that satisfies the associativity condition (\ref{associativity22}),  is a pentagon map.
\end{corollary}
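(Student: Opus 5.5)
The plan is to reduce Corollary \ref{cor1} to Theorem \ref{theo01}. By that theorem, a map $S$ satisfying the associativity condition (\ref{associativity22}) is a pentagon map if and only if the refactorization condition (\ref{refac_1}) holds. So it suffices to show that bi-injectivity forces (\ref{refac_1}) to imply $x'=x$, $y'=y$, $z'=z$, whenever $a\neq b\neq c\neq d\neq a$.

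The argument peels off the operations from the outside in, applying the bi-injectivity property three times.

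\textbf{First step.} Write (\ref{refac_1}) as
\begin{align*}
\left<A'd\right>_{z'}=\left<Ad\right>_{z},
\end{align*}
with $A'=\left<\left<ab\right>_{x'}c\right>_{y'}$ and $A=\left<\left<ab\right>_{x}c\right>_{y}$. The right argument $d$ is common to both sides, so bi-injectivity gives $z'=z$ and $A'=A$.

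\textbf{Second step.} The equality $A'=A$ reads $\left<B'c\right>_{y'}=\left<Bc\right>_{y}$, where $B'=\left<ab\right>_{x'}$ and $B=\left<ab\right>_{x}$. Again the right argument $c$ is shared, so bi-injectivity yields $y'=y$ and $B'=B$.

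\textbf{Third step.} Finally $\left<ab\right>_{x'}=\left<ab\right>_{x}$ gives $x'=x$ by one more application of bi-injectivity.

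With (\ref{refac_1}) established, Theorem \ref{theo01} applies and $S$ is a pentagon map.

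The only point needing care is the partiality of the magmas. Each equation above is between elements that are defined, because both sides of (\ref{refac_1}) are assumed to exist. Consequently every intermediate expression $A,A',B,B'$ lies in the appropriate domain $D$. This lets bi-injectivity be applied to elements actually in the domain at every step.

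I expect no real obstacle. The main subtlety is only to note that the hypothesis $a\neq b\neq c\neq d$ of Theorem \ref{theo01} is harmless here, since bi-injectivity holds for all arguments.
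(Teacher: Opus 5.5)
Your proposal is correct and follows the paper's proof essentially verbatim. Like the paper, you reduce to Theorem \ref{theo01} and peel off the three operations from the outside in, using bi-injectivity to obtain $z'=z$, then $y'=y$, then $x'=x$; your remark that every intermediate expression is defined because both sides of (\ref{refac_1}) are is a small point of care the paper leaves implicit.
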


\begin{proof}
According to Theorem \ref{theo01}, we have to show that 
\begin{align}\label{pcor}
\left<\left<\left<ab\right>_{x'}c\right>_{y'}d\right>_{z'}=\left<\left<\left<ab\right>_{x}c\right>_{y}d\right>_{z},
\end{align}
implies $x'=x,$ $y'=y,$ $z'=z.$ The family of magmas $(\mathcal{I},\left<\cdot\right>)_{x\in \mathcal{X}}$, admits the bi-injectivity property, so (\ref{pcor}) implies $z'=z$ and 
\begin{align*}
 \left<\left<ab\right>_{x'}c\right>_{y'}=\left<\left<ab\right>_{x}c\right>_{y},
\end{align*}
that in turn implies $y'=y$ and 
\begin{align*}
 \left<ab\right>_{x'}=\left<ab\right>_{x},
\end{align*}
that finally implies $x'=x.$ So the requirements of Theorem \ref{theo01} are satisfied and $S$ is a pentagon map.
\end{proof}
As an example we study the family of binary operations
\begin{align*}
  \left<ab\right>_{x}:= & f(x)a+x b,
\end{align*}
and we provide the necessary and sufficient conditions that $f$ should satisfy so that a map that is equivalent to the associativity condition (\ref{associativity22}) to be a pentagon map.
In detail, although we could use Corollary \ref{cor1}, here we use Theorem \ref{theo01} to prove the first part of the following Proposition. 
\begin{prop} \label{prop01}
A. Let $S:\mathcal{X}\times \mathcal{X}\rightarrow \mathcal{X}\times \mathcal{X},$ be the map 
\begin{align*}
  S: (x,y)\mapsto & (u,v)=\left(x\frac{f(y)}{f(xy)},xy\right),
  \end{align*}
where $f:\mathcal{X}\rightarrow \mathcal{X}$ satisfies 
\begin{align}\label{fcon} 
f(u)f(v)=&f(x).
\end{align}
Then $S$ is a pentagon map.

B. The function $f:\mathcal{X}\rightarrow \mathcal{X},$ defined by the  curve
\begin{align}\label{fermat} 
x^\alpha+(f(x))^\alpha=&1,&\alpha>0\in\mathbb{R},
\end{align}
where $x\in\mathbb{CP}^1$ when $\alpha=1,$ $x\in (0,1)\subset \mathbb{R}$ when $\alpha\neq 1,$ satisfies (\ref{fcon}). The associated pentagon map $S,$ preserves the Poisson structure 
\begin{align*}
\Omega=&m(x,y)\partial x\wedge \partial y,& m(x,y):=&y(f(y))^{\alpha-1}(f(x))^\alpha,
\end{align*} 
 and admits the function $I(x,y):=x f(y)$ as an invariant, hence it is a Liouville integrable map.
\end{prop}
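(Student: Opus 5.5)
The plan is to first derive the map $S$ from the associativity condition (\ref{associativity22}) for the family $\left<ab\right>_x=f(x)a+xb$, and then verify the criterion of Theorem \ref{theo01}. Part B then reduces to direct computations with the explicit $f$.

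For Part A, expanding both sides of (\ref{associativity22}) gives
\begin{align*}
\left<a\left<bc\right>_y\right>_x&=f(x)a+xf(y)b+xyc,\\
\left<\left<ab\right>_uc\right>_v&=f(v)f(u)a+f(v)ub+vc.
\end{align*}
Comparing the coefficients of $a,b,c$ (treating $a,b,c$ as independent, which is how I read the genericity assumption $a\neq b\neq c\neq d\neq a$, e.g.\ $\mathcal{I}$ a vector space and $a,b,c,d$ linearly independent) gives three relations.
\begin{itemize}
\item The coefficient of $c$ gives $v=xy$.
\item The coefficient of $b$ gives $u=xf(y)/f(xy)$.
\item The coefficient of $a$ gives exactly the compatibility condition (\ref{fcon}).
\end{itemize}
So under (\ref{fcon}) the map $S$ satisfies the associativity condition.

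Next I expand
\begin{align*}
\left<\left<\left<ab\right>_{x'}c\right>_{y'}d\right>_{z'}=f(z')f(y')f(x')a+f(z')f(y')x'b+f(z')y'c+z'd,
\end{align*}
and equate it with the same expression in the unprimed variables. The coefficients are then matched in order.
\begin{itemize}
\item The coefficient of $d$ gives $z'=z$.
\item The coefficient of $c$ then gives $f(z)y'=f(z)y$, hence $y'=y$, using $f\neq 0$ on the relevant domain.
\item The coefficient of $b$ then gives $x'=x$.
\end{itemize}
The coefficient of $a$ is then automatically consistent. Theorem \ref{theo01} therefore yields the pentagon property.

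For Part B, I take $f(x)=(1-x^\alpha)^{1/\alpha}$ and verify (\ref{fcon}) at the level of $\alpha$-th powers. We have $f(v)^\alpha=1-x^\alpha y^\alpha$. For $f(u)$,
\begin{align*}
f(u)^\alpha=1-\frac{x^\alpha(1-y^\alpha)}{1-x^\alpha y^\alpha}=\frac{1-x^\alpha}{1-x^\alpha y^\alpha}.
\end{align*}
The product of these two is $1-x^\alpha=f(x)^\alpha$. Taking $\alpha$-th roots is legitimate on $(0,1)$, where everything is positive; for $\alpha=1$ the identity is directly rational.

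The invariant is immediate:
\begin{align*}
I(u,v)=uf(v)=\frac{xf(y)}{f(xy)}\,f(xy)=xf(y)=I(x,y).
\end{align*}

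For the Poisson structure, I must check $m(u,v)=m(x,y)\det\frac{\partial(u,v)}{\partial(x,y)}$. Since $v=xy$, the Jacobian determinant is $xu_x-yu_y$. I would compute it via logarithmic derivatives of $u=xf(y)/f(xy)$, using $f'(t)=-t^{\alpha-1}f(t)^{1-\alpha}$. I would then simplify $m(u,v)$ using the expressions for $f(u)^\alpha$ and $f(v)^\alpha$ found above. I expect this bookkeeping with fractional powers to be the main obstacle. As a cross-check, I would verify that $\{x,y\}=m$ makes $I$ nonconstant.

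Finally, since the phase space is two-dimensional, $\Omega$ is nondegenerate where $m\neq0$, and there is one functionally independent invariant $I$, the map is Liouville integrable.
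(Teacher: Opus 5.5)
Your proposal is correct and follows the paper's proof essentially step by step. You use coefficient comparison for the family $\left<ab\right>_x=f(x)a+xb$ to derive $S$ and (\ref{fcon}), then the criterion of Theorem \ref{theo01} via the triangular system $z'=z$, $y'=y$, $x'=x$, then the $\alpha$-th power verification of (\ref{fcon}), and finally the Jacobian test for $\Omega$, which is exactly what the paper does. The bookkeeping you defer closes as in the paper: $xu_x-yu_y=\frac{x}{f(xy)}\bigl(f(y)-yf'(y)\bigr)=\frac{x}{f(xy)(f(y))^{\alpha-1}}$ and $m(u,v)=\frac{xy}{f(xy)}(f(x))^\alpha$, so $m(u,v)/m(x,y)$ equals the Jacobian.
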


\begin{proof}
A. We consider the family of binary operations 
\begin{align}\label{bin1}
\left<ab\right>_x:= f(x) a+x b.
\end{align} 
The associativity condition (\ref{associativity22}), in terms of the binary operation (\ref{bin1}) reads
\begin{align}\label{asc}
  f(u)f(v)\;a+u f(v)\;b+v\; c= & f(x)\;a+xf(y)\;b+xy\; c.
\end{align}
From the coefficients of $c$ and of  $b$ we obtain  
\begin{align*}
v=&xy,& u=&x\frac{f(y)}{f(v)}=x\frac{f(y)}{f(xy)},
\end{align*}
which are  exactly the defining relations of the map $S$ of the Proposition, while the coefficient of $a$ reads $f(u)f(v)=f(x),$ that is exactly the  condition (\ref{fcon}). So the map $S$ is equivalent to the associativity condition (\ref{associativity22}).

From Theorem \ref{theo01}, a map $S$ that satisfies (\ref{associativity22}) for the family of binary operations (\ref{bin1}), is a pentagon  map if the equation  (\ref{refac_1}) implies the identity solution. In terms of the binary operation (\ref{bin1}), equation (\ref{refac_1}) reads
\begin{multline*}
f(z')f(y')f(x')\;a+x'f(z')f(y')\;b+f(z')y'\;c+z'\;d\\=f(z)f(y)f(x)\;a+xf(z)f(y)\;b+f(z)y\;c+z\;d,
\end{multline*}
that results the following system of equations
\begin{align*}
z'=&z, & f(z')y'&=f(z)y,& x'f(z')f(y')=&xf(z)f(y),& f(z')f(y')f(x')=&f(z)f(y)f(x).
\end{align*}  
The equations above, clearly imply the identity solution $x'=x,$ $y'=y$ and $z'=z.$

 So provided that the  map $S$ satisfies (\ref{fcon}), it is equivalent to (\ref{associativity22}), and since (\ref{refac_1}) implies the identity solution, from Theorem \ref{theo01} we conclude that $S$ is a pentagon map.

B. Now we prove that $f$ defined by (\ref{fermat}) satisfies (\ref{fcon}). Using the definitions of $u$, and  $v$ from the map, we have 
\begin{align*}
f(u)f(v)=&f\left(x\frac{f(y)}{f(v)}\right)f(v)=\left(1-x^\alpha\frac{(f(y))^\alpha}{(f(v))^\alpha}\right)^{\frac{1}{\alpha}}f(v)=
\left((f(v))^\alpha-x^\alpha(f(y))^\alpha\right)^{\frac{1}{\alpha}},
\end{align*}
that holds due to the assumptions of the Proposition on the domain of $f.$ Now since
Since  $(v,f(v)),$  $(y,f(y)),$ are points on the curve we have 
\begin{align*}
\left((f(v))^\alpha-x^\alpha(f(y))^\alpha\right)^{\frac{1}{\alpha}}=&\left(1-v^\alpha-x^\alpha(1-y^\alpha)\right)^{\frac{1}{\alpha}}=f(x).
\end{align*}
So $f(u)f(v)=f(x),$ and indeed this choice of $f$ satisfies the  condition (\ref{fcon}).

 We turn now to the proof that when $f$ is defined from (\ref{fermat}), $S$ is a Liouville integrable map.  The function $I(x,y):=x f(y),$ is an invariant function of $S$ since it holds 
    \begin{align*}
    I(u,v)=u f(v)=xf(y)=I(x,y).
    \end{align*}
    In order to prove that $S$ preserves the Poisson structure $\Omega$ we have to show that $m(u,v) \partial u\wedge \partial v=m(x,y) \partial x\wedge \partial y,$ or equivalently to show that 
    \begin{align}\label{mes}
    \frac{m(u,v)}{m(x,y)}=\left|\frac{\partial (u,v)}{\partial (x,y)}\right|,
    \end{align} 
    where $\left|\frac{\partial (u,v)}{\partial (x,y)}\right|$ denotes the Jacobian determinant of the map.
    There is 
    \begin{align*}
    \left|\frac{\partial (u,v)}{\partial (x,y)}\right|=&\begin{vmatrix}
                                                              {\displaystyle  \frac{\partial u}{\partial x}} & {\displaystyle\frac{\partial u}{\partial y}} \\
                                                               {\displaystyle \frac{\partial v}{\partial x}} & {\displaystyle\frac{\partial v}{\partial y} } 
                                                               \end{vmatrix}=\frac{x}{f(xy)}\left(f(y)-yf'(y)\right),
    \end{align*}
    where $f'(y)$ denotes the derivative of $f(y)$ with respect to $y$. From the equation of the curve $y^\alpha+(f(y))^\alpha=1,$ by differentiation we obtain $y^{\alpha-1}+(f(y))^{\alpha-1}f'(y)=0,$ and the Jacobian determinant finally reads
\begin{align} \label{jac01}
    \left|\frac{\partial (u,v)}{\partial (x,y)}\right|=&\frac{x}{f(xy)(f(y))^{\alpha-1}}.
    \end{align}    
  On the other hand we have
  \begin{multline*}
  m(u,v)=v(f(v))^{\alpha-1}(f(u))^\alpha=\frac{v(f(v))^\alpha}{f(v)}(1-u^\alpha)=
  \frac{xy(f(xy))^\alpha}{f(xy)}\left(1-x^\alpha\frac{(f(y))^\alpha}{(f(xy))^\alpha}\right)\\
  =\frac{xy}{f(xy)}(f(x))^\alpha,
  \end{multline*}  
and
  \begin{align*}
  \frac{m(u,v)}{m(x,y)}=&\frac{\frac{xy}{f(xy)}(f(x))^\alpha}{y(f(y))^{\alpha-1}(f(x))^\alpha}=\frac{x}{f(xy)(f(y))^{\alpha-1}},
  \end{align*}
  that coincides with (\ref{jac01}), that assures that the Poisson structure $\Omega$ is preserved under the action of the map $S$. To recapitulate, mapping $S,$ preserves $\Omega$ and admits $I(x,y)$ as an invariant function, so it is a Liouville integrable map according to the definition of  Liouville integrability of maps in \cite{ves2}.

\end{proof}

For the pentagon map $S$
of Proposition \ref{prop01} the following remarks are in order.
\begin{itemize}
\item The map $S,$ in its equivalent form $\tau S\tau$ (where $\tau:(x,y)\mapsto (y,x)$)  that  satisfies the reverse-pentagon equation,    first appeared in \cite{Kashaev:1998}. 
\item The map $S$  represents a family of Liouville integrable pentagon maps parameterized by $\alpha>0\in\mathbb{R}.$ This family of maps does not belong to the QRT family \cite{Quispel1988} of Liouville integrable maps. It is a novel family of Liouville integrable HKY (non-QRT)  type \cite{Hirota2001} maps, as it appears to be absent from the relevant literature  \cite{Tanaka2009,Kassotakis2010,Roberts_2015,Joshi2019}.
\item For $\alpha=1,$ $S$ is  a rational pentagon map which is (M\"ob) equivalent to $S_I.$
For $\alpha=2,$ we have a trigonometric pentagon map that explicitly reads 
 \begin{align*}
 S: (x,y)\mapsto & \left(x\frac{\sqrt{ 1-y^2}}{\sqrt{ 1-x^2y^2}},xy\right),
 \end{align*}
 or equivalently
 \begin{align*}
 S: ({\sin \theta},{\sin \phi})\mapsto & \left({\sin \theta}\frac{{\cos \phi}}{\cos (\arcsin(\sin \theta \sin \phi))},\sin \theta \sin \phi\right).
 \end{align*}
 For $\alpha=3$ an elliptic map while for $\alpha>3\in\mathbb{N}$ we obtain  elliptic maps of higher genus.
\item The second component of $S,$ serves as an addition theorem on the corresponding  curve. 
\item The inverse of $S$ reads 
\begin{align*}
  S^{-1}:(x,y)\mapsto & \left(f\left(f(x)f(y)\right),\frac{y}{f\left(f(x)f(y)\right)}\right).
\end{align*}
Moreover $S$ together with $S^{-1}$ satisfy the ten-term relation \cite{Kashaev:1998}, that results that the map
\begin{align*}
  T:=S^{-1}_{13}\tau_{23} S_{13}:(x,y,z)\mapsto & \left(f\left(f\left(x\frac{f(z)}{f(xz)}\right)f(y)\right),xz,\frac{y}{f\left(f\left(x\frac{f(z)}{f(xz)}\right)f(y)\right)}\right),
\end{align*} 
 satisfies the tetrahedron equation
\begin{align*}
T_{123}T_{145}T_{246}T_{356}=&T_{356}T_{246}T_{145}T_{123}.
\end{align*}
\item From  $S \tau $ and $\tau S,$ where $\tau: (x,y)\mapsto (y,x),$ we can build the map 
\begin{align}\label{hexa}
H:(x,y)\mapsto & \left(y\frac{f(x)}{f(xy)},xy,x\frac{f(y)}{f(xy)}\right),& z^\alpha+(f(z))^\alpha=&1, &\alpha>0,
\end{align} 
 that it can be shown that satisfies the hexagon equation 
\begin{align*}
H_{12}H_{23}\tau_{34}H_{12}=&\tau_{34}H_{45}H_{23}\tau_{12}H_{23},
\end{align*} 
so it is a hexagon map.
The form of the hexagon equation above is presented in \cite{Dimakis:2015}, and the map $H$ (for $\alpha=1$) was first presented in \cite{Kashaev2015}, see also \cite{Dimakis:2018} where the construction above (for $\alpha=1$) was given.
\end{itemize}

\section{$n-$ary operations and multicomponent pentagon maps}\label{sec3_0}

If we consider  families of partial $n-$ary magmas $(\mathcal{I},\left<\cdot\right>_{\bf{ x}})_{x\in \mathcal{X}^{n-1}},$ where $\left<\cdot\right>_{\bf{ x}}:\mathcal{I}^n\rightarrow \mathcal{I},$   $n>2,$ a family of $n-$ary operations, we have the following Theorem that extends Theorem \ref{theo01}.

 \begin{theorem}\label{theo11}
Let $\mathcal{I},\mathcal{X}$ be non-empty sets. Let $(\mathcal{I},\left<\cdot\right>_{\bf{ x}})_{x\in \mathcal{X}^{n-1}},$ $n\geq 2$ be a  family of  $n$-ary partial magmas. The map $S:\mathcal{X}^{n-1}\times \mathcal{X}^{n-1}\rightarrow \mathcal{X}^{n-1}\times \mathcal{X}^{n-1},$ $S: ({\bf x}, {\bf y})\mapsto ({\bf u}, {\bf v})$ where ${\bf x}:=(x_1\ldots, x_{n-1}),$ ${\bf y}:=(y_1,\ldots, y_{n-1}),$ ${\bf u}:=(u_1,\ldots, u_{n-1}),$ ${\bf v}:=(v_1,\ldots, v_{n-1}),$ that satisfies the  condition 
\begin{align}\label{nary_ass}
\left<\left<a_1\ldots a_n\right>_{\bf u} a_{n+1}\ldots a_{2n-1}\right>_{\bf v}=&\left<a_1\ldots a_{n-1}\left<a_n\ldots a_{2n-1}\right>_{\bf y}\right>_{\bf x},
\end{align}

  is a pentagon map iff for $a_i\neq a_j,$  the equation
\begin{multline} \label{refac_11}
  \left<\left<\left<a_1\ldots a_n\right>_{{\bf x}'}a_{n+1}\ldots a_{2n-1}\right>_{{\bf y}'}a_{2n}\ldots a_{3n-1}\right>_{{\bf z}'}\\=
  \left<\left<\left<a_1\ldots a_n\right>_{{\bf x}}a_{n+1}\ldots a_{2n-1}\right>_{{\bf y}}a_{2n}\ldots a_{3n-1}\right>_{{\bf z}},
\end{multline}
implies ${\bf x}'={\bf x}, {\bf y}'={\bf y}, {\bf z}'={\bf z}.$
\end{theorem}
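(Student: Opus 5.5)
We mimic the proof of Theorem \ref{theo01} verbatim, replacing the binary bracket by the $n$-ary one. Write $S:({\bf x},{\bf y})\mapsto({\bf u},{\bf v})=(F({\bf x},{\bf y}),G({\bf x},{\bf y}))$ and embed it in $(\mathcal{X}^{n-1})^3$ as $S_{12},S_{13},S_{23}$, with the same hat/bar/tilde conventions as before: $S_{12}:({\bf x},{\bf y},{\bf z})\mapsto(\hat{\bf x},\tilde{\bf y},{\bf z})$, and so on. We fix $3n-2$ pairwise distinct letters $a_1,\ldots,a_{3n-2}$. These are the letters needed for a three-fold nested bracket, so the index $3n-1$ in (\ref{refac_11}) should be read as the appropriate last index. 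We then consider the $n$-ary analogues of the five ``bracketings'' appearing at the vertices of Figure \ref{fig_lr}.

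The fully left-nested word is
\[
\left<\left<\left<a_1\ldots a_n\right>_{\bf x}a_{n+1}\ldots a_{2n-1}\right>_{\bf y}a_{2n}\ldots a_{3n-2}\right>_{\bf z}.
\]
The fully right-nested word is
\[
\left<a_1\ldots a_{n-1}\left<a_n\ldots a_{2n-2}\left<a_{2n-1}\ldots a_{3n-2}\right>_{\bf z}\right>_{\bf y}\right>_{\bf x}.
\]
The remaining three words are the mixed words obtained by a single application of (\ref{nary_ass}):

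\begin{itemize}
\item the inner bracket in the last slot of the middle bracket, $\left<a_1\ldots a_{n-1}\left<\left<a_n\ldots a_{2n-1}\right>_{\bf y} a_{2n}\ldots\right>_{\bf z}\right>_{\bf x}$;
\item the word $\left<\left<a_1\ldots a_{n-1}\left<a_n\ldots\right>_{\bf y}\right>_{\bf x}a_{2n}\ldots\right>_{\bf z}$;
\item the word $\left<\left<a_1\ldots a_n\right>_{\bf x}a_{n+1}\ldots a_{2n-2}\left<a_{2n-1}\ldots\right>_{\bf z}\right>_{\bf y}$.
\end{itemize}

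Applying (\ref{nary_ass}) to the appropriate sub-bracket (or to the outer bracket, which acts on the parameters in positions $i,j$ and leaves the third unchanged) yields the $n$-ary versions of (\ref{as1_1}) and (\ref{as1_2}). The only care needed is to check, in each of the five moves, that the sub-bracket to which (\ref{nary_ass}) is applied has exactly the shape $\left<\ldots\left<\ldots\right>\right>$ with the inner bracket in the last slot, or $\left<\left<\ldots\right>\ldots\right>$ with the inner bracket in the first slot. In the move corresponding to $S_{13}$ the inner bracket sits in a last slot while carrying its own first-slot bracket. This is the step I expect to require the most bookkeeping: in the binary case ``first'' and ``last'' slots of a middle bracket are adjacent, whereas here $n-2$ free letters sit in between. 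I would handle it by writing the middle-level word explicitly and applying (\ref{nary_ass}) with $a_n$ replaced by the whole inner block, which is allowed since (\ref{nary_ass}) holds for all elements of $\mathcal{I}$.

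Composing these moves, we obtain the analogues of (\ref{lhs_pp}) and (\ref{rhs_pp}). Both $S_{12}S_{13}S_{23}$ and $S_{23}S_{12}$ rewrite the fully right-nested word as a fully left-nested word, with parameters
\[
(\hat{\bar{\bf x}},\tilde{\bar{\bf y}},\tilde{\hat{\bf z}})\quad\text{and}\quad(\hat{\bf x},\bar{\tilde{\bf y}},\hat{\bf z})
\]
respectively. Hence the two left-nested words are equal as elements of $\mathcal{I}$.

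The equivalence then follows exactly as in Theorem \ref{theo01}. If $S$ is pentagon, the parameter triples coincide, so the property (\ref{refac_11}) is consistent with, and realized by, these identities. Conversely, the equality of the two left-nested words is an instance of the hypothesis in (\ref{refac_11}), and the implication ${\bf x}'={\bf x}$, ${\bf y}'={\bf y}$, ${\bf z}'={\bf z}$ forces $S_{12}S_{13}S_{23}=S_{23}S_{12}$. Distinctness of the $a_i$ ensures that the equality is a genuine condition on the parameters and not an accidental coincidence of letters. For $n=2$ the argument reduces literally to the proof of Theorem \ref{theo01}.
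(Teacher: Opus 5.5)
Your route is the same as the paper's. Its proof of Theorem~\ref{theo11} consists of the two chains of applications of (\ref{nary_ass}) that link the fully right-nested word to the fully left-nested word, with parameters $(\hat{\bar{\mathbf x}},\tilde{\bar{\mathbf y}},\tilde{\hat{\mathbf z}})$ and $(\hat{\mathbf x},\bar{\tilde{\mathbf y}},\hat{\mathbf z})$ respectively. These chains pass through exactly the three mixed bracketings you list. Your handling of the $S_{13}$ step, substituting a whole bracket into one slot of (\ref{nary_ass}), is correct. You are also right that a triply nested word uses $3n-2$ letters: the range $a_{2n}\ldots a_{3n-1}$ in (\ref{refac_11}) and in the paper's proof gives the outer bracket $n+1$ arguments, which is an off-by-one slip.

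This proves the `if' direction completely. The `only if' direction is not proved. Your sentence that (\ref{refac_11}) is ``consistent with, and realized by, these identities'' only says that the two particular left-nested words produced by the chains carry equal parameters, which is the pentagon equation restated. It does not show that an arbitrary identity between left-nested words forces $\mathbf x'=\mathbf x$, $\mathbf y'=\mathbf y$, $\mathbf z'=\mathbf z$. The paper treats this direction with the equally brief remark that for a pentagon map (LHS) coincides with (RHS), so you have reproduced its argument faithfully.

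However, no argument can supply this direction, because the implication is false as literally stated. Take $\mathcal I=\mathbb Z$, $\langle a_1\ldots a_n\rangle_{\mathbf x}:=a_1+\cdots+a_n$ for every $\mathbf x$, any $\mathcal X$ with $|\mathcal X|\ge 2$, and $S=\mathrm{id}$. Then (\ref{nary_ass}) holds and $S$ is a pentagon map, yet both sides of (\ref{refac_11}) equal $a_1+\cdots+a_{3n-2}$ for all parameter values. So what you and the paper actually establish is that (\ref{refac_11}) is sufficient. Necessity would need extra hypotheses on the family of brackets, or on how (\ref{nary_ass}) determines $S$.
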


\begin{proof}
First note that for $n=2$, (\ref{nary_ass}) and (\ref{refac_11}) coincide respectively with (\ref{associativity22}) and (\ref{refac_1}), so Theorem \ref{theo01} is included.

For $n>2$, the map on the left hand side (LHS) of the pentagon equation, that is:
\begin{align}\label{Plhs}
 S_{12}S_{13}S_{23}:(\mathbf{ x},\mathbf{ y},\mathbf{ z})\mapsto  (\hat{\bar {\mathbf{ x}}},\tilde{\bar {\mathbf{ y}}},\tilde {\hat {\mathbf{ z}}}),
 \end{align}
  satisfies
  \begin{multline*}
   (LHS): \left<\left<\left<a_1\ldots a_n\right>_{\hat{\bar {\mathbf{ x}}}}a_{n+1}\ldots a_{2n-1}\right>_{\tilde{\bar {\mathbf{ y}}}}a_{2n}\ldots a_{3n-1}\right>_{\tilde{ \hat {\mathbf{ z}}}} \\
    =\left<\left<a_1\ldots a_{n-1}\left<a_n\ldots a_{2n-1}\right>_{\bar{\mathbf{ y}}}\right>_{\bar {\mathbf{ x}}}a_{2n}\ldots a_{3n-1}\right>_{\tilde{\hat { \mathbf{z}}}} \\
     =\left<a_1\ldots a_{n-1}\left<\left<a_n\ldots a_{2n-1}\right>_{\bar{\mathbf{ y}}}a_{2n}\ldots a_{3n-1} \right>_{\hat { \mathbf{z}}} \right>_{ \mathbf{x}}\\
     =\left<a_1\ldots a_{n-1}\left<a_n\ldots a_{2n-2}\left<a_{2n-1}\ldots a_{3n-1}\right>_{\mathbf{z}} \right>_{\mathbf{y}} \right>_{ \mathbf{x}}. 
  \end{multline*}
  While the map on the right hand side (RHS) of the pentagon equation, that is:
\begin{align}\label{Plhs}
 S_{23}S_{12}:(\mathbf{ x},\mathbf{ y},\mathbf{ z})\mapsto  (\hat{ \mathbf{x}},\bar{\tilde{\mathbf y}},\hat{ \mathbf{z}}),
 \end{align}
 satisfies
\begin{multline*}
   (RHS): \left<\left<\left<a_1\ldots a_n\right>_{\hat{ \mathbf{x}}}a_{n+1}\ldots a_{2n-1}\right>_{\bar{\tilde{\mathbf y}}}a_{2n}\ldots a_{3n-1}\right>_{\hat{ \mathbf{z}}} \\
  =\left< \left<a_1\ldots a_n\right>_{\hat{ \mathbf{x}}}\ldots a_{2n-2}\left<a_{2n-1}\ldots a_{3n-1} \right>_{\mathbf{ z}}\right>_{\tilde{\mathbf{y}}}\\
     =\left<a_1\ldots a_{n-1}\left<a_n\ldots a_{2n-2}\left<a_{2n-1}\ldots a_{3n-1}\right>_{\mathbf{z}} \right>_{\mathbf{y}} \right>_{ \mathbf{x}}. 
  \end{multline*}
  So if $S$ is a pentagon map the (LHS) coincides with the (RHS). On the other side, for $S$ to be a pentagon map (LHS) should coincide with the (RHS) that results that the equation
  \begin{align*}
  \left<\left<\left<a_1\ldots a_n\right>_{\hat{\bar {\mathbf{ x}}}}a_{n+1}\ldots a_{2n-1}\right>_{\tilde{\bar {\mathbf{ y}}}}a_{2n}\ldots a_{3n-1}\right>_{\tilde{ \hat {\mathbf{ z}}}}=
  \left<\left<\left<a_1\ldots a_n\right>_{\hat{ \mathbf{x}}}a_{n+1}\ldots a_{2n-1}\right>_{\bar{\tilde{\mathbf y}}}a_{2n}\ldots a_{3n-1}\right>_{\hat{ \mathbf{z}}}
  \end{align*}
  should imply 
  \begin{align*}
  (\hat{\bar {\mathbf{ x}}},\tilde{\bar {\mathbf{ y}}},\tilde {\hat {\mathbf{ z}}})=&(\hat{ \mathbf{x}},\bar{\tilde{\mathbf y}},\hat{ \mathbf{z}}),
  \end{align*}
  and that completes the proof.
\end{proof}
\begin{corollary} \label{cor2}
Let $\mathcal{I},\mathcal{X}$ be non-empty sets. Let $(\mathcal{I},\left<\cdot\right>_{\bf{ x}})_{x\in \mathcal{X}^{n-1}},$ $n\geq 2$ be a  family of  $n$-ary partial magmas satisfying the injectivity property:
\begin{align}\label{bij2}
\left<a_1'a_2\ldots a_{n-1}\right>_{\mathbf{ x}'}=\left<a_1a_2\ldots a_{n-1}\right>_{\mathbf{ x}} \implies & \mathbf{ x}'=\mathbf{ x},\; a_1'=a_1.
\end{align}
 Then a map $S:\mathcal{X}^{n-1}\times \mathcal{X}^{n-1}\rightarrow \mathcal{X}^{n-1}\times \mathcal{X}^{n-1},$ $S:({\mathbf x},{\mathbf y})\mapsto ({\mathbf u},{\mathbf v}),$  that satisfies the condition (\ref{nary_ass}),  is a pentagon map.
\end{corollary}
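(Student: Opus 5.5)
The proof will mirror that of Corollary \ref{cor1}: we reduce to the criterion of Theorem \ref{theo11} and then peel off the nested $n$-ary operations one layer at a time using the injectivity property (\ref{bij2}). Concretely, assume
\begin{align*}
\left<\left<\left<a_1\ldots a_n\right>_{{\bf x}'}a_{n+1}\ldots a_{2n-1}\right>_{{\bf y}'}a_{2n}\ldots a_{3n-1}\right>_{{\bf z}'}=
\left<\left<\left<a_1\ldots a_n\right>_{{\bf x}}a_{n+1}\ldots a_{2n-1}\right>_{{\bf y}}a_{2n}\ldots a_{3n-1}\right>_{{\bf z}}
\end{align*}
with pairwise distinct $a_i$. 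By Theorem \ref{theo11}, it suffices to show that this forces ${\bf x}'={\bf x}$, ${\bf y}'={\bf y}$ and ${\bf z}'={\bf z}$.

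First, look at the outermost operation. On both sides its trailing arguments are the same, namely $a_{2n},\ldots,a_{3n-1}$. Only the first slot may differ: on the left it is $A':=\left<\left<a_1\ldots a_n\right>_{{\bf x}'}a_{n+1}\ldots a_{2n-1}\right>_{{\bf y}'}$, and on the right it is the analogous $A$. Property (\ref{bij2}) says the value of the operation determines both the parameter and the first argument when the remaining arguments agree. Hence ${\bf z}'={\bf z}$ and $A'=A$.

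Second, apply (\ref{bij2}) to $A'=A$. Again the trailing arguments $a_{n+1},\ldots,a_{2n-1}$ coincide, so ${\bf y}'={\bf y}$ and $\left<a_1\ldots a_n\right>_{{\bf x}'}=\left<a_1\ldots a_n\right>_{{\bf x}}$.

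Third, one more application of (\ref{bij2}) to this last equality gives ${\bf x}'={\bf x}$. The hypothesis of Theorem \ref{theo11} is then satisfied, and $S$ is a pentagon map.

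The only point needing care is the reading of (\ref{bij2}). As typed, it shows $n-1$ arguments; it has to be understood as the $n$-ary statement $\left<a_1'a_2\ldots a_n\right>_{{\bf x}'}=\left<a_1a_2\ldots a_n\right>_{{\bf x}}\Rightarrow {\bf x}'={\bf x},\ a_1'=a_1$. With that reading, injectivity in the first slot, together with identification of the parameter, is exactly what lets us peel the left-nested expression from the outside in. I expect no real obstacle beyond this bookkeeping. Partiality of the magmas only requires that both sides of the equation are defined, which is implicit in the hypothesis of Theorem \ref{theo11}.
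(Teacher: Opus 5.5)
Your proposal is correct and follows the paper's proof: reduce to the criterion of Theorem \ref{theo11}, then apply the injectivity property three times, peeling off $\mathbf{z}$, then $\mathbf{y}$, then $\mathbf{x}$ from the outside in. You are also right that (\ref{bij2}) has to be read with $n$ arguments $a_1,\ldots,a_n$ rather than $a_1,\ldots,a_{n-1}$; the paper's display has an index slip there.
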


\begin{proof}
The proof is similar to the proof of Corollary \ref{cor1}. According to Theorem \ref{theo11}, we have to show that 
\begin{multline} \label{pcor2}
  \left<\left<\left<a_1\ldots a_n\right>_{{\bf x}'}a_{n+1}\ldots a_{2n-1}\right>_{{\bf y}'}a_{2n}\ldots a_{3n-1}\right>_{{\bf z}'}\\=
  \left<\left<\left<a_1\ldots a_n\right>_{{\bf x}}a_{n+1}\ldots a_{2n-1}\right>_{{\bf y}}a_{2n}\ldots a_{3n-1}\right>_{{\bf z}},
\end{multline}
implies ${\mathbf x}'={\mathbf x},$ ${\mathbf y}'={\mathbf y},$ ${\mathbf z}'={\mathbf z}.$ The family of partial magmas $(\mathcal{I},<\;>_{\bf{ x}})_{x\in \mathcal{X}^{n-1}},$ admits the injectivity property, so (\ref{pcor2}) implies that ${\mathbf z}'={\mathbf z}$ and 
\begin{align*}
 \left<\left<a_1\ldots a_n\right>_{{\bf x}'}a_{n+1}\ldots a_{2n-1}\right>_{{\bf y}'}=&\left<\left<a_1\ldots a_n\right>_{{\bf x}}a_{n+1}\ldots a_{2n-1}\right>_{{\bf y}},
\end{align*}
that in turn implies ${\mathbf y}'={\mathbf y}$ and 
\begin{align*}
 \left<a_1\ldots a_n\right>_{{\bf x}'}=&\left<a_1\ldots a_n\right>_{{\bf x}},
\end{align*}
that finally implies ${\mathbf x}'={\mathbf x}.$ So the requirements of Theorem \ref{theo11} are satisfied and $S$ is a pentagon map.

\end{proof}
We mention here that a possible generalization of Theorem \ref{theo01} and of Theorem \ref{theo11}, is to consider families of $m-$valued magmas and families of $m-$valued $n-$ary magmas respectively. Note that $m-$valued operations on sets were first introduced in \cite{BuchstaberNovikov1971}, see also the recent work on $m-$valued quandles \cite{Talalaev2024}.

\subsection{Ternary operations and  pentagon maps}

In this Section, we apply Corollary \ref{cor2} to the ternary case for a specific but quite general family of ternary operations. In detail, we study the family of ternary operations
\begin{align*}
  \left<abc\right>_{x,X}:= & f(x,X)a+g(x)b+Xc. 
\end{align*}
In the following Proposition, by using  Corollary \ref{cor2} we provide the necessary  conditions that $f$ and $g$ should satisfy so that a map that satisfies (\ref{nary_ass}) with $n=3,$   to be a pentagon map. Moreover we present two  choices of families of $f,g$ which result  two families of two-component pentagon maps.

\begin{prop}\label{prop3}
The map $S: \mathcal{X}^2\times \mathcal{X}^2\rightarrow \mathcal{X}^2\times \mathcal{X}^2$ with 
\begin{align*}
S:(x,X;y,Y)\mapsto& (u,U;v,V)=\left(g^{-1}\left(\frac{g(x)}{f(v,V)}\right),X\frac{f(y,Y)}{f(v,V)};g^{-1}\left(Xg(y)\right),X Y\right),
\end{align*}
where $g:\mathcal{X}\rightarrow \mathcal{X},$  a bijection and $f:\mathcal{X}\times \mathcal{X}\rightarrow \mathcal{X}\times \mathcal{X},$ satisfies 
\begin{align}\label{con_te1}
f(u,U)f(v,V)=f(x,X),
\end{align}
is a pentagon map.
\end{prop}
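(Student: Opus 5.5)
The plan mirrors Proposition \ref{prop01}, now with $n=3$. I will consider the family of ternary operations
\begin{align*}
\left<abc\right>_{x,X}:= f(x,X)\,a+g(x)\,b+X\,c,
\end{align*}
and proceed in two steps. First I show that the map $S$ of the statement is exactly what the associativity-type condition (\ref{nary_ass}) with $n=3$ forces, with (\ref{con_te1}) as the one remaining compatibility condition. Then I verify the injectivity hypothesis of Corollary \ref{cor2}, or directly the factorization condition (\ref{refac_11}) of Theorem \ref{theo11}.

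For the first step, I expand both sides of
\begin{align*}
\left<\left<a_1a_2a_3\right>_{u,U}a_4a_5\right>_{v,V}=\left<a_1a_2\left<a_3a_4a_5\right>_{y,Y}\right>_{x,X}.
\end{align*}
The left-hand side is
\begin{align*}
f(v,V)f(u,U)\,a_1+f(v,V)g(u)\,a_2+f(v,V)U\,a_3+g(v)\,a_4+V\,a_5 .
\end{align*}
The right-hand side is
\begin{align*}
f(x,X)\,a_1+g(x)\,a_2+Xf(y,Y)\,a_3+Xg(y)\,a_4+XY\,a_5 .
\end{align*}
Treating $a_1,\dots,a_5$ as independent (for instance elements of a vector space over a field containing $\mathcal{X}$), I compare coefficients from $a_5$ downward:
\begin{itemize}
\item the $a_5$ coefficient gives $V=XY$;
\item the $a_4$ coefficient gives $v=g^{-1}(Xg(y))$, which uses that $g$ is a bijection;
\item the $a_3$ coefficient gives $U=Xf(y,Y)/f(v,V)$;
\item the $a_2$ coefficient gives $u=g^{-1}\left(g(x)/f(v,V)\right)$;
\item the $a_1$ coefficient is precisely $f(u,U)f(v,V)=f(x,X)$, i.e.\ (\ref{con_te1}).
\end{itemize}
Hence, under (\ref{con_te1}), $S$ satisfies (\ref{nary_ass}).

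For the second step I would rather check (\ref{refac_11}) of Theorem \ref{theo11} directly than rely on the injectivity property, since that property needs some care with the roles of the last entries. I expand
\begin{align*}
\left<\left<\left<a_1a_2a_3\right>_{x,X}a_4a_5\right>_{y,Y}a_6a_7\right>_{z,Z}
\end{align*}
and read off the coefficients of $a_7,a_6,a_5,a_4,a_3,a_2$ in turn. These are
\begin{align*}
Z,\quad g(z),\quad f(z,Z)Y,\quad f(z,Z)g(y),\quad f(z,Z)f(y,Y)X,\quad f(z,Z)f(y,Y)g(x).
\end{align*}
Equating them with their primed counterparts determines the parameters successively: $Z'=Z$ from $a_7$, then $z'=z$ by injectivity of $g$, then $Y'=Y$ after dividing by the now-equal $f(z,Z)$, then $y'=y$, then $X'=X$ after dividing by $f(z,Z)f(y,Y)$, and finally $x'=x$. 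The $a_1$ coefficient is then automatically consistent. So (\ref{refac_11}) forces the identity, and Theorem \ref{theo11} gives the pentagon property.

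The main obstacle is justifying the coefficient comparison and the divisions. I need the $a_i$ to be generic or linearly independent, and $f$ to take nonzero values (invertible in the underlying field). I will state these as standing assumptions, as is implicit in the corresponding argument of Proposition \ref{prop01}, and note that the hypothesis $a_i\neq a_j$ of Theorem \ref{theo11} is what licenses treating the $a_i$ as independent.
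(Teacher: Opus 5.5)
Your proposal is correct. The first step is exactly the paper's: you expand (\ref{nary_ass}) for $\left<abc\right>_{x,X}=f(x,X)a+g(x)b+Xc$, read off $V,v,U,u$ from the coefficients of $a_5,\dots,a_2$, and obtain (\ref{con_te1}) as the $a_1$ coefficient.

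You differ in the second step. The paper invokes Corollary \ref{cor2} and checks the one-level injectivity property (\ref{bij2}) via (\ref{bije22}). You instead verify the factorization condition (\ref{refac_11}) of Theorem \ref{theo11} directly, which is what the paper itself did for Proposition \ref{prop01}. The two arguments are the same cascade organized differently. The proof of Corollary \ref{cor2} peels off $(z,Z)$, then $(y,Y)$, then $(x,X)$, which matches your reading of $a_7,a_6$, then $a_5,a_4$, then $a_3,a_2$.

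The paper's packaging is more modular: it states a property of a single operation and handles the nesting once, in the corollary. Your caution about it is justified, though. Read literally with $a,a'$ arbitrary, (\ref{bij2}) fails for a linear operation, because $a'$ can be chosen to absorb any change in the coefficients of $b$ and $c$. The coefficient comparison in (\ref{bije22}) is valid only when $a,a'$ lie in a span independent of $b,c$. That is exactly the situation inside the nested bracket, and your direct expansion makes it explicit.

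One small correction concerns your last sentence. What licenses treating the $a_i$ as independent is not the hypothesis $a_i\neq a_j$, since distinct vectors need not be linearly independent. It is your freedom to choose the auxiliary $\mathcal{I}$ (say a vector space of dimension at least $7$ over a field containing the values of $f,g$) together with a linearly independent tuple, as you already suggest earlier. Once the coefficients match, (\ref{nary_ass}) holds identically in all arguments. The ``if'' direction of Theorem \ref{theo11} then only needs, at each point $(\mathbf{x},\mathbf{y},\mathbf{z})$, one such tuple for which (\ref{refac_11}) forces the identity.
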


\begin{proof}
We consider the family of ternary operations 
\begin{align}\label{tern1}
\left<abc\right>_{x,X}:= & f(x,X)a+g(x)b+Xc.
\end{align} 
The  condition (\ref{nary_ass}) reads
\begin{align} \label{assoc_ter}
  \left<\left<abc\right>_{u,U}de\right>_{v,V} =& \left<ab\left<cde\right>_{y,Y}\right>_{x,X},
\end{align}
or
\begin{align*}
  f(v,V)\left(f(u,U) a+g(u) b+U c\right)+g(v) d+V e=&f(x,X) a+g(x) b+X\left(f(y,Y) c+g(y) d+Y e\right).
\end{align*}
From the coefficients of  $e,d,c,$ and $b,$ of the equation above we obtain
\begin{align*}
  V & =X Y, & g(v)=&Xg(y),& U=&X\frac{f(y,Y)}{f(v,V)},& g(u)=&\frac{g(x)}{f(v,V)},
\end{align*}
that provided that $g$ is a bijection, we obtain exactly the defining relations for the map $S$ of the Proposition. Furthermore, the coefficient of $a$ reads $f(u,U)f(v,V)=f(x,X),$ that is exactly the  condition (\ref{con_te1}).

So  provided that there is an  $f$ such that (\ref{con_te1}) holds and $g$ is a bijection, mapping $S$ is equivalent to (\ref{assoc_ter}). From Corollary \ref{cor2},  mapping $S$ is a pentagon map if the family of ternary operation (\ref{tern1}) respects the injectivity property (\ref{bij2}). In terms of the ternary operation (\ref{tern1}), the first equation of (\ref{bij2}) reads
\begin{align} \label{bije2}
  \left<a'bc\right>_{x',X'} & =\left<abc\right>_{x,X},
\end{align}
or 
\begin{align}\label{bije22}
  f(x',X')a'+g(x')b+X'c & =f(x,X)a'+g(x)b+X c.
\end{align}
From the coefficient of $c,$ we have $X'=X.$ From the coefficient of $b$ we have $g(x')=g(x),$ that since $g$ is a bijection it implies $x'=x.$  So we have $X'=X,$ $x'=x$ and  (\ref{bije22}) becomes $ f(x,X)a'=f(x,X)a,$ that implies $a'=a$ provided that $f$ is not the trivial zero function.  So indeed the ternary operation (\ref{tern1}) respects the required injectivity property and the map $S$ satisfies (\ref{con_te1}), hence it is a pentagon map.  
\end{proof}

The family of maps $S$ of the previous Proposition is not empty. There exist bijections $g$ and functions $f$ such that the functional equation (\ref{con_te1}) is satisfied. For example, if we consider $g$ to be defined by the  curve 
\begin{align}\label{g-f}
  x^\alpha+(g(x))^\alpha & =1,& \alpha>&0\in\mathbb{R}, & \begin{aligned}
                       x\in  \mathbb{CP}^1 && \mbox{when}&& \alpha=1 &\\
                        x\in  (0,1)&& \mbox{when}&& \alpha\neq 1&& 
                        \end{aligned}
\end{align} 
and $f$ by the surfaces
\begin{align*}
\alpha (1-x^\alpha)+X^\alpha+\left(f(x,X)\right)^\alpha=&1, 
\end{align*}
where also $X\in  \mathbb{CP}^1,$ when $\alpha=1,$ and $X\in  (0,1),$ when  $\alpha\neq 1,$
then (\ref{con_te1}) is satisfied and $S$ is a pentagon map. In this case, since for the definition of $g$ it holds $g(g(x))=x,$ mapping $S$ of Proposition \ref{prop3} reads
\begin{align} \label{S11fg}
S:(x,X;y,Y)\mapsto(u,U;v,V)=\left(g\left(\frac{g(x)}{f(v,V)}\right), X\frac{f(y,Y)}{f(v,V)};g\left(Xg(y)\right),X Y\right).
\end{align} 
Another family of the functions $g,f$ that satisfy  (\ref{con_te1}), is 
\begin{align*}
g(x)=&x, & x^\alpha+X^\alpha+\left(f(x,X)\right)^\alpha=1,
\end{align*}  
and  mapping $S$ of Proposition \ref{prop3} becomes
\begin{align}\label{S2fg}
S:(x,X;y,Y)\mapsto(u,U;v,V)=\left(\frac{x}{f(v,V)},X\frac{f(y,Y)}{f(v,V)};Xy,XY\right).
\end{align} 
Both families of maps (\ref{S11fg}), (\ref{S2fg}) which are parameterized by $\alpha,$ serve as generalizations of the families of maps presented in Proposition \ref{prop01}. Further studies on other choices of the functions $f,g,$  will be considered elsewhere.

\subsection{Parametric pentagon maps}


In this section we introduce the notion of {\em parametric pentagon maps} and as an example we show that  (\ref{S2fg}) is $(M\ddot{o}b)^2$ equivalent to a parametric pentagon map.

 Note that although parametric Yang-Baxter maps have been extensively studied \cite{Veselov:20031,ABS:YB,Veselov:07,Papageorgiou:2010,Doliwa_2014,Rizos_2013,Grahovski:2016,mikhailov2016,Kass2,KASSOTAKIS2025101094,Doikou2024,Adamopoulou_2025,Doikou11022026}, to our knowledge, there are currently no parametric pentagon maps available in the literature.
 In order to initiate the study on parametric pentagon maps,  we divide parametric maps into two classes (see Definition \ref{def1}), parametric maps that belong to class (A) and parametric maps that belong to class (B). 
 
 \begin{definition}\label{def1}
The  maps $S^{ab}: \mathcal{X}^2\times \mathcal{X}^2\rightarrow \mathcal{X}^2\times \mathcal{X}^2$ of the form
\begin{align*}
 S^{XY}: (x,X;y,Y)\mapsto & (u,X;v,Y),& \mbox{or} && S^{xy}: (x,X;y,Y)\mapsto & (x,U;y,V),& (A)
 \end{align*}
will be called parametric maps that belong to class (A). While maps $S: \mathbb{X}^2\times \mathbb{X}^2\rightarrow \mathbb{X}^2\times \mathbb{X}^2$ of the form
 \begin{align*}
 S^{xY}: (x,X;y,Y)\mapsto & (x,U;v,Y),& \mbox{or} && S^{Xy}: (x,X;y,Y)\mapsto & (u,X;y,V),& (B)
 \end{align*}
 will be called parametric maps that belong to class (B).
 \end{definition} 
 Clearly for parametric maps of class (A) the variables $X,Y$ or $x,y$ can be considered as parameters since they remain invariant under the action of the map $S$. Similarly, for parametric maps that belong to the  class (B), the variables $x,Y$ or $X,y$ are considered as parameters.  Note that in some studies on parametric Yang-Baxter maps, the parameters were not necessarily assumed to belong to the same set as the variables. Nevertheless, as it was stated in \cite{Veselov:07} the parameters could  be considered as variables. The latter viewpoint turned very useful in extending Yang-Baxter maps and integrable difference equations  into their non-abelian counterparts \cite{Doliwa_2014,Kassotakis:2:2021,Kassotakis:2022b,KASSOTAKIS2025116824}.

In order to distinguish if two parametric pentagon maps are included in the same equivalence class, we provide an equivalence relation that respects the pentagon equation.

\begin{definition}\label{def2}
Two  maps $S: \mathcal{X}^2 \times \mathcal{X}^2 \rightarrow \mathcal{X}^2 \times \mathcal{X}^2$ and $\widehat S: \mathcal{X}^2 \times \mathcal{X}^2 \rightarrow \mathcal{X}^2 \times \mathcal{X}^2$ are called $(M\ddot{o}b)^2$ equivalent if it exists a birational map $\phi: \mathcal{X}^2 \rightarrow \mathcal{X}^2$ such that $ S   (\phi\times \phi)= (\phi\times \phi) \widehat S.$
\end{definition}

\begin{prop}\label{prop1}
Let $S: \mathcal{X}^2 \times \mathcal{X}^2 \rightarrow \mathcal{X}^2 \times \mathcal{X}^2$ be a pentagon map and $\widehat S$ a $(M\ddot{o}b)^2$ equivalent map to $S$. Then $\widehat S$ is also a pentagon map.
\end{prop}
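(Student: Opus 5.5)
The plan is to conjugate the pentagon equation for $S$ by the threefold product $\phi\times\phi\times\phi$ and to use that this product commutes with the embedding into three factors. Write $\Phi:=\phi\times\phi\times\phi$ acting on $(\mathcal{X}^2)^3$. The defining relation of Definition \ref{def2} is $S(\phi\times\phi)=(\phi\times\phi)\widehat S$. Since $\phi$ is birational, we may rewrite it (generically) as $\widehat S=(\phi\times\phi)^{-1}S(\phi\times\phi)$.

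The first step is to establish
\begin{align*}
\widehat S_{ij}=\Phi^{-1}S_{ij}\Phi, \qquad i<j\in\{1,2,3\}.
\end{align*}
For $S_{12}$ and $S_{23}$ this is immediate. On its two active factors $\Phi$ restricts to $\phi\times\phi$, and on the passive factor it is just $\phi$, which $\phi^{-1}$ cancels. For $S_{13}$ we argue in the same way. $\Phi$ acts factorwise, so the middle (untouched) factor is conjugated by $\phi$ and returns to itself. On factors $1$ and $3$, $\Phi$ acts as $\phi\times\phi$, which yields $\widehat S_{13}$. The point to check carefully is that $\Phi$ is a product of identical copies of $\phi$. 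This makes it commute with the permutation implicit in the definition of $S_{13}$, since $S_{13}=\tau_{23}S_{12}\tau_{23}$ and $\tau_{23}\Phi=\Phi\tau_{23}$.

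Next, we substitute into the pentagon equation:
\begin{align*}
\widehat S_{12}\widehat S_{13}\widehat S_{23}=\Phi^{-1}S_{12}S_{13}S_{23}\Phi=\Phi^{-1}S_{23}S_{12}\Phi=\widehat S_{23}\widehat S_{12}.
\end{align*}
The middle equality is the pentagon equation (\ref{pentagon}) for $S$. The intermediate factors $\Phi\Phi^{-1}$ cancel telescopically, so $\widehat S$ is a pentagon map.

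The main obstacle is only technical. Because $\phi$ is birational rather than bijective, the identities hold on a dense open set where all compositions are defined, and the equality of rational maps then extends. I would state that the pentagon equation is understood as an identity of (rational) maps on the common domain of definition. For genuine bijections $\phi$ the argument is exact.
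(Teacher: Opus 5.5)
Your proposal is correct and follows the paper's proof: the paper also conjugates the pentagon equation by $\phi_1\phi_2\phi_3=\phi\times\phi\times\phi$. You additionally justify $\widehat S_{13}=\Phi^{-1}S_{13}\Phi$ via the commutation of $\Phi$ with $\tau_{23}$, and you note the domain-of-definition issue for birational $\phi$; the paper leaves both points implicit.
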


\begin{proof}
Since $\widehat S$ is  $(M\ddot{o}b)^2$ equivalent map to $S$ there exists a birational map $\phi$. Denoting $\phi_1=\phi\times Id_{\mathcal{X}^2}\times Id_{\mathcal{X}^2},$ $\phi_2=Id_{\mathcal{X}^2}\times \phi\times Id_{\mathcal{X}^2}$ and $\phi_3= Id_{\mathcal{X}^2}\times Id_{\mathcal{X}^2}\times \phi,$ we have
\begin{gather*}
\widehat S_{12} \widehat S_{13} \widehat S_{23}=  \phi_1^{-1} \phi_2^{-1} \phi_3^{-1} S_{12}  S_{13} S_{23} \phi_1  \phi_2 \phi_3
= \phi_1^{-1} \phi_2^{-1} \phi_3^{-1} S_{23}  S_{12} \phi_1 \phi_2 \phi_3=\widehat S_{23}  \widehat S_{12},
\end{gather*}
where we used the hypothesis that $S$ is a pentagon map.
\end{proof}
Note that the parametric maps that belong to the class $(B)$ are $(M\ddot{o}b)^2$ equivalent where the bijection $\phi: \mathcal{X}^2 \rightarrow \mathcal{X}^2,$ is the partial transposition $\phi:(x,X)\mapsto (X,x).$   The same holds true for parametric maps that belong to the class $(A).$

The parametric pentagon maps we obtain in the following Proposition belong to the class $(B).$

\begin{prop}
The map (\ref{S2fg})  is  $(M\ddot{o}b)^2$ equivalent to the parametric pentagon map
\begin{align*}
S^{xY}:(x,X;y,Y)\mapsto\left(x,\frac{Xy}{\left(1+y^\alpha+Y^\alpha\right)^{\frac{1}{\alpha}}};\frac{\left((1+x^\alpha)(1+y^\alpha+Y^\alpha)+
X^\alpha y^\alpha\right)^{\frac{1}{\alpha}}}{X},Y\right).
\end{align*}
\end{prop}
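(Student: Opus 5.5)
The plan is to build the birational map $\phi$ of Definition \ref{def2} from quantities that (\ref{S2fg}) leaves unchanged. Once $S(\phi\times\phi)=(\phi\times\phi)S^{xY}$ is established, Proposition \ref{prop1} makes $S^{xY}$ a pentagon map, because (\ref{S2fg}) is one by Proposition \ref{prop3}.

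\textbf{Step 1: invariants of (\ref{S2fg}).} Since $S^{xY}$ fixes its first-slot coordinate $x$ and its second-slot coordinate $Y$, I first look for a function of $(x,X)$ and a function of $(y,Y)$ that (\ref{S2fg}) preserves. Two candidates can be read off directly.
\begin{itemize}
\item By (\ref{con_te1}) and $u=x/f(v,V)$ we get $f(u,U)=f(x,X)/f(v,V)$, hence $u/f(u,U)=x/f(x,X)$.
\item From $v=Xy$ and $V=XY$ we get $V/v=Y/y$.
\end{itemize}
So $p:=x/f(x,X)$ is a first-slot invariant and $Y/y$ (or any function of it) is a second-slot invariant.

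\textbf{Step 2: the chart $\phi$.} I want coordinates $(p,P)$ on $\mathcal{X}^2$ in which these invariants become plain coordinates. The relation $x^\alpha+X^\alpha+f^\alpha=1$ suggests
\begin{align*}
\phi:(p,P)\mapsto (x,X)=\left(\frac{p}{(1+p^\alpha+P^\alpha)^{1/\alpha}},\frac{P}{(1+p^\alpha+P^\alpha)^{1/\alpha}}\right).
\end{align*}
For this choice $f(x,X)^\alpha=1/(1+p^\alpha+P^\alpha)$, so $x/f=p$ and $X/f=P$. The inverse is $(p,P)=(x/f,X/f)$, and it is birational in the sense used in the paper.

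Under this chart the ratio $Y/y$ becomes $P/p$, not $P$ itself. So I expect to modify the chart in the second slot, for instance by replacing $P$ with a suitable power or reciprocal, or by composing with the partial transposition allowed by the remark after Proposition \ref{prop1}. The aim is that $Y$ is literally the preserved coordinate. I will fix the exact normalisation by matching the first nontrivial component of $S^{xY}$, namely
\begin{align*}
\frac{Xy}{(1+y^\alpha+Y^\alpha)^{1/\alpha}}=X\,\frac{y}{(1+y^\alpha+Y^\alpha)^{1/\alpha}}.
\end{align*}
This is exactly $X$ multiplied by the first component of $\phi(y,Y)$, which mirrors $v=Xy$ in (\ref{S2fg}) and suggests that the first version of $\phi$ is essentially correct.

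\textbf{Step 3: verifying the conjugation.} With $\phi$ fixed, I check $S(\phi\times\phi)=(\phi\times\phi)S^{xY}$ component by component.
\begin{itemize}
\item The first-slot equality follows from the invariance of $x/f$.
\item $V=XY$ and $v=Xy$ follow by substitution.
\item The remaining component reduces to an identity for $f(v,V)^{-\alpha}$. Expanding, it should produce the combination $(1+x^\alpha)(1+y^\alpha+Y^\alpha)+X^\alpha y^\alpha$ seen in the last entry of $S^{xY}$.
\end{itemize}

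I expect the main obstacle to be the bookkeeping in Steps 2 and 3: choosing which variable plays the role of $P$ in each slot, including possible reciprocals, so that the conjugated map has exactly the stated form with $x$ and $Y$ fixed. Once $\phi$ is right, the verification is routine algebra using only $x^\alpha+X^\alpha+f^\alpha=1$.
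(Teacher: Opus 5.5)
Your strategy of building $\phi$ from the invariants $x/f$ and $Y/y$ is the paper's. There is a genuine gap, though: the whole proof lies in choosing the chart, and the chart you propose is wrong. None of the repairs you list can fix it.

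In your coordinates $(q,Q)=(y/f(y,Y),\,Y/f(y,Y))$ on the second slot, the output's second coordinate is $V/f(v,V)=XY/f(v,V)$. This equals $Y/f(y,Y)$ iff $Xf(y,Y)=f(v,V)$, i.e.\ iff $X^\alpha(1-y^\alpha-Y^\alpha)=1-X^\alpha y^\alpha-X^\alpha Y^\alpha$, i.e.\ only when $X^\alpha=1$. So the conjugated map does not fix the last coordinate and is not of class (B) at all. Your proposed repairs also fail:
\begin{itemize}
\item Replacing $X/f$ by a power or reciprocal of it changes nothing.
\item The partial transposition destroys the first-slot invariant, since $U/f(u,U)=Xf(y,Y)/f(x,X)\neq X/f(x,X)$.
\end{itemize}
Your consistency check is also a misreading. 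In $S^{xY}$ the entry $Xy/(1+y^\alpha+Y^\alpha)^{1/\alpha}$ sits in the $U$ position (first slot), so it does not ``mirror'' $v=Xy$ (second slot) and gives no evidence for your chart.

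The missing idea is the one you half-noticed: since $Y/y$ becomes the ratio $P/p$, make that ratio itself a coordinate. The same $\phi$ acts on both slots, so the new coordinates must be a function of $f/x$ and a function of $X/x$. The paper uses $(x,X)\mapsto (f(x,X)/x,\,X/x)$, i.e.\ $(1/p,\,P/p)$ in your variables. Its inverse is $(a,A)\mapsto (D^{-1/\alpha},\,AD^{-1/\alpha})$ with $D=1+a^\alpha+A^\alpha$. Note that the first entry is $D^{-1/\alpha}$, not $aD^{-1/\alpha}$. If you used $x/f$ instead of $f/x$, you would get $S^{xY}$ conjugated by $a\mapsto 1/a$, not $S^{xY}$ itself.

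Let $(a,A)$ and $(b,B)$ be these coordinates on the two slots, with $D_1,D_2$ the corresponding $D$'s. Then:
\begin{itemize}
\item The first and last outputs are the invariants $K$ and $J$.
\item The second output is $U/u=Xf(y,Y)/x=Ab\,D_2^{-1/\alpha}$.
\item The third output is $f(v,V)/v$, with $(f(v,V)/v)^\alpha=\bigl(D_1D_2-A^\alpha(1+B^\alpha)\bigr)/A^\alpha=\bigl((1+a^\alpha)D_2+A^\alpha b^\alpha\bigr)/A^\alpha$.
\end{itemize}
These are exactly the components of the stated $S^{xY}$.
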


\begin{proof}
The pentagon map (\ref{S2fg}), reads
\begin{align*}
S:(x,X;y,Y)\mapsto(u,U;v,V)=\left(\frac{x}{f(v,V)},X\frac{f(y,Y)}{f(v,V)};Xy,XY\right),
\end{align*} 
where $f$  is defined by the family of surfaces 
\begin{align*}
 x^\alpha+X^\alpha+\left(f(x,X)\right)^\alpha=&1, & \alpha>0.
\end{align*}  
From the definition of $S,$ we observe that $\frac{V}{v}=\frac{Y}{y},$ so
\begin{align*}
J(y,Y):=&\frac{Y}{y},
\end{align*}
 is an invariant function of $S$ that involves only $y$ and $Y$.  Also, by eliminating $f(v,V)$ from the first two components of $S$ and by using (\ref{con_te1}) that holds, we obtain 
$
  \frac{f(u,U)}{u}= \frac{f(x,X)}{x}
$   
so
\begin{align*}
K(x,X):=&\frac{f(x,X)}{x},
\end{align*}
 is another invariant function of $S$ that involves only $x$ and $X$. The invariants $J$ and $K$ suggest to consider the map $\phi:\mathcal{X}\times\mathcal{X}\rightarrow \mathcal{X}\times\mathcal{X},$ such that
 \begin{align*}
 \phi: (x,X)\mapsto \left(\frac{f(x,X)}{x},\frac{X}{x}\right),
 \end{align*}
The inverse of $\phi$ reads
\begin{align*}
 \phi^{-1}: (x,X)\mapsto \left(\frac{1}{\left(1+x^\alpha+X^\alpha\right)^{\frac{1}{\alpha}}},\frac{X}{\left(1+x^\alpha+X^\alpha\right)^{\frac{1}{\alpha}}}\right),
 \end{align*}
and $(\phi\times \phi) S(\phi^{-1}\times \phi^{-1})$ becomes exactly $S^{xY}$. So indeed mapping (\ref{S2fg}) and $S^{xY}$ are $(M\ddot ob)^2$ equivalent and that completes the proof.

\end{proof}
Clearly the map of the Proposition above, serves as a family (parameterized by $\alpha$)  of parametric pentagon maps that belong to the class $(B).$  It defines $(M\ddot{o}b)^2$ equivalence classes of families of pentagon maps. For $\alpha\in \mathbb{N}$ and specifically for $\alpha=1,$ we have an equivalence class of   rational parametric pentagon maps.  For $\alpha=2,$ we have the trigonometric case, while for  $\alpha>2\in\mathbb{N}$ we obtain  elliptic maps of genus one and higher.


\section{Multicomponent pentagon maps revisited} \label{sec3}
Given a set equipped with a family of binary operations, one can  define families of $n-$ary operations on this set in various ways. For example, given a family of binary operations $\left<\cdot\right>_x: \mathcal{I}\times\mathcal{I}\rightarrow\mathcal{I},$ $x\in \mathcal{X},$ one  family of $n-$ary operations $\left<\cdot\right>_{x_1,\ldots,x_{n-1}}: \mathcal{I}^n\rightarrow \mathcal{I},$ $x_i\in \mathcal{X}$ is defined by
\begin{align*}
\left<a_1\ldots a_n\right>_{x_1,\ldots,x_{n-1}}:=&\left<\ldots\left<\left<\left<a_1a_2\right>_{x_{n-1}}a_3\right>_{x_{n-2}}\ldots a_{n-1}\right>_{x_2} a_n\right>_{x_1}.
\end{align*}
If now for this family of $n-$ary operations there is a map that satisfies the requirements of Theorem  \ref{theo11}, or of Corollary \ref{cor2}, then this map will be a multicomponent pentagon map. However, clearly not all multicomponent pentagon maps admit an underlying family of  of $n-$ary operations with the desired properties. For this reason, in this Section we propose  a construction that produces families of multicomponent maps from a single given map. As we shall see, when the single given map is a pentagon map, we obtain multicomponent pentagon and entwining pentagon maps.
\subsection{Two-families of multicomponent maps}
Let $R:\mathcal{X}\times \mathcal{X}\rightarrow \mathcal{X}\times \mathcal{X}$ be a  map. We consider two families of maps, the family
\begin{align*}
 {{}^{(n)}}t^{i,k,l}:\mathcal{X}^n\times \mathcal{X}^n\rightarrow & \mathcal{X}^n\times \mathcal{X}^n,& \begin{aligned}
                                                                                                    n> & 1,\\
                                                                                                    i,k=&1,\ldots,n,\\
                                                                                                    l=&0,1,\ldots,n-1,
                                                                                                  \end{aligned} 
 \end{align*}
and the family
\begin{align*}
 {{}^{(n)}}T^{i,k,l}:\mathcal{X}^n\times \mathcal{X}^n\rightarrow & \mathcal{X}^n\times \mathcal{X}^n,& \begin{aligned}
                                                                                                    n> & 1,\\
                                                                                                    i,l=&1,\ldots,n,\\
                                                                                                    k=&1,\ldots,n-l+1.
                                                                                                  \end{aligned} 
 \end{align*}
 The first family is defined as
 \begin{align}\label{t-map}
 {{}^{(n)}}t^{i,k,l}:=R_{i+k+l-1,i+n}\; R_{i+k+l-1,i+n+1}\ldots R_{i+k+l-1,i+n+k-1},
 \end{align}
where the first subscript in $R_{\cdot,}$  is considered modulo $n$, while the second subscript  $R_{,\cdot}$ is considered modulo $n$ with the agreement that we are starting with $n$ and not with $0$. 

The second family of maps is defined by
\begin{align}\label{T-map}
 {{}^{(n)}}T^{i,k,l}:={}^{(n)}t^{i,k,0}\;\;{}^{(n)}t^{i,k,1}\;\ldots\; {}^{(n)}t^{i,k,l-1},
 \end{align}
that is build from a specific composition of members of the  first family. Clearly there is ${{}^{(n)}}T^{i,k,1}\equiv {{}^{(n)}}t^{i,k,0}.$

 In the full generality we can consider the maps 
\begin{align*}
 {{}^{(n)}}\mathcal{T}^{i,{\bf k},{\bf l}}:={}^{(n)}t^{i,k_1,l_1}\;\;{}^{(n)}t^{i,k_2,l_2}\;\ldots\; {}^{(n)}t^{i,k_m,l_m},
 \end{align*}
where ${\bf k}=(k_1,k_2,\ldots, k_m),$ ${\bf l}=(l_1,l_2,\ldots l_m),$  and $k_i\in \{1,2,\ldots K\},$ $l_i\in \{1,2,\ldots L\},$ $L,M\leq n.$ Nevertheless, in this article we consider only the two families (\ref{t-map}) and (\ref{T-map}).

 In order to present  (\ref{t-map}) and (\ref{T-map}) in a compact form, we introduce the notation 
\begin{align*}
  \circ_{i=0}^{s}\;f^{i,j} &:= f^{0,j} f^{1,j}\ldots f^{s,j},\\
  \circ_{j=0}^{r} \circ_{i=0}^{s}\;f^{i,j} &:=\circ_{j=0}^{r}( f^{0,j} f^{1,j}\ldots f^{s,j}),
\end{align*}
where $r,s\in\mathbb{N}$ and $f^{i,j}$ a collection of maps.  With this annotation  (\ref{t-map}) and (\ref{T-map}) respectively read
\begin{align}\label{tt-map}
{{}^{(n)}}t^{i,k,l}:=&\circ_{j=1}^{k}R_{i+k+l-1,i+n+j-1},\\ \label{TT-map}
{{}^{(n)}}T^{i,k,l}:=&\circ_{m=0}^{l-1}\circ_{j=1}^{k}R_{i+k+m-1,i+n+j-1}.
\end{align}

As an example we present the aforementioned  maps for $n=2.$ When $n=2$, there is $i,k=1,2,$ $l=0,1$ and the maps  ${{}^{(2)}}t^{i,k,0}$ explicitly read
  \begin{align*}
  {{}^{(2)}}t^{1,1,0}=&R_{1,3}, & {{}^{(2)}}t^{2,1,0}=&R_{2,4},&
  {{}^{(2)}}t^{1,2,0}=&R_{2,3}R_{2,4}, & {{}^{(2)}}t^{2,2,0}=&R_{1,4}R_{1,3}, 
  \end{align*}
   while the maps  ${{}^{(2)}}t^{i,k,1}$ read
  \begin{align*}
  {{}^{(2)}}t^{1,1,1}=&R_{2,3}, & {{}^{(2)}}t^{2,1,1}=&R_{1,4},&
  {{}^{(2)}}t^{1,2,1}=&R_{1,3}R_{1,4}, & {{}^{(2)}}t^{2,2,1}=&R_{2,4}R_{2,3}.
\end{align*}  
For the maps ${{}^{(2)}}T^{i,k,l},$ we have $i,l=1,2,$ $k=1,\ldots,3-l,$ and they explicitly read
\begin{align*}
  {{}^{(2)}}T^{1,1,1}= {{}^{(2)}}t^{1,1,0}=&R_{1,3},& {{}^{(2)}}T^{2,1,1}= {{}^{(2)}}t^{2,1,0}=&R_{2,4}, \\
   {{}^{(2)}}T^{1,2,1}= {{}^{(2)}}t^{1,2,0}=&R_{2,3}R_{2,4},& {{}^{(2)}}T^{2,2,1}= {{}^{(2)}}t^{2,2,0}=&R_{1,4}R_{1,3},\\
  {{}^{(2)}}T^{1,1,2}= {{}^{(2)}}t^{1,1,0}\; {{}^{(2)}}t^{1,1,1}=&R_{1,3}R_{2,3},& {{}^{(2)}}T^{2,1,2}= {{}^{(2)}}t^{2,1,0}\; {{}^{(2)}}t^{2,1,1}=&R_{2,4}R_{1,4}.
  \end{align*}
  
  When $n=3,$ we present  the maps ${{}^{(3)}}T^{i,k,l},$  where $i,l=1,2,3,$ $k=1,\ldots,4-l,$ only for $i=1.$ These maps explicitly read 
  \begin{align*}
  {{}^{(3)}}T^{1,1,1}=&R_{1,4},& {{}^{(3)}}T^{1,2,1}=&R_{2,4}R_{2,5}, & {{}^{(3)}}T^{1,3,1}=&R_{3,4}R_{3,5}R_{3,6},\\
  {{}^{(3)}}T^{1,1,2}=&R_{1,4}R_{2,4},& {{}^{(3)}}T^{1,2,2}=&R_{2,4}R_{2,5}R_{3,4}R_{3,5}, & &\\
  {{}^{(3)}}T^{1,1,3}=&R_{1,4}R_{2,4}R_{3,4}.& & & &
    \end{align*}



\subsection{Transfer like pentagon maps}

Assume now that $R:\mathcal{X}\times \mathcal{X}\rightarrow \mathcal{X}\times \mathcal{X}$ is a pentagon map. As in the previous Section, in order to distinguish  pentagon maps  we introduce the following equivalence relation that respects the pentagon equation.

\begin{definition}\label{equi2}
Two  maps $S: \mathcal{X}^n \times \mathcal{X}^n \rightarrow \mathcal{X}^n \times \mathcal{X}^n$ and $\widehat S: \mathcal{X}^n \times \mathcal{X}^n \rightarrow \mathcal{X}^n \times \mathcal{X}^n$ are called $(M\ddot{o}b)^n$ equivalent if it exists a birational map $\phi: \mathcal{X}^n \rightarrow \mathcal{X}^n$ such that $ S\circ  (\phi\times \phi)= (\phi\times \phi)\circ\widehat S.$
\end{definition}    
  
\begin{remark} \label{rem1}
 The families of maps ${{}^{(n)}}t^{i+1,k,l}$ and ${{}^{(n)}}T^{i+1,k,l},$ are families  of  equivalence classes of maps, under the equivalence relation of Definition \ref{equi2}. 
Indeed, if we consider the birational map $\phi:\mathcal{X}^n\rightarrow \mathcal{X}^n$ defined as $\phi: (x_1,x_2,\ldots,x_n)\mapsto (x_2,x_3,\ldots,x_1),$ there is $(\phi\times\phi)^{-1}\; {{}^{(n)}}t^{i,k,l} (\phi\times\phi)={{}^{(n)}}t^{i+1,k,l},$ and similarly for ${{}^{(n)}}T^{i,k,l}.$ 
\end{remark}

To the families of maps ${{}^{(n)}}t^{i+1,k,l}$ and ${{}^{(n)}}T^{i+1,k,l},$ we associate the maps ${{}^{(n)}}t^{i,k,l}_{{\bf st}},{{}^{(n)}}T^{i,k,l}_{{\bf st}}:\mathcal{X}^n \times \mathcal{X}^n\times \mathcal{X}^n\rightarrow \mathcal{X}^n \times \mathcal{X}^n\times \mathcal{X}^n,$ ${\bf s}< {\bf t}\in \{{\bf 1},{\bf 2},{\bf 3}\},$ where   ${\bf 1}:=(1,2,\ldots,n),$ ${\bf 2}:=(n+1,n+2,\ldots,2n),$ ${\bf 3}:=(2n+1,2n+2,\ldots,3n),$ which are defined as  
\begin{align*}
{{}^{(n)}}t^{i,k,l}_{{\bf 12}}:=&{{}^{(n)}}t^{i,k,l}\times id_{\mathcal{X}^n},&
{{}^{(n)}}t^{i,k,l}_{{\bf 23}}:=&id_{\mathcal{X}^n}\times {{}^{(n)}}t^{i,k,l},&
{{}^{(n)}}t^{i,k,l}_{{\bf 13}}:=&(\tau\times id_{\mathcal{X}^n}) {{}^{(n)}}t^{i,k,l}_{{\bf 23}} (\tau\times id_{\mathcal{X}^n}),
\end{align*}
and 
\begin{align*}
{{}^{(n)}}T^{i,k,l}_{{\bf 12}}:=&{{}^{(n)}}T^{i,k,l}\times id_{\mathcal{X}^n},&
{{}^{(n)}}T^{i,k,l}_{{\bf 23}}:=&id_{\mathcal{X}^n}\times {{}^{(n)}}T^{i,k,l},&
{{}^{(n)}}T^{i,k,l}_{{\bf 13}}:=&(\tau\times id_{\mathcal{X}^n}) {{}^{(n)}}T^{i,k,l}_{{\bf 23}} (\tau\times id_{\mathcal{X}^n}),
\end{align*}
where $\tau:\mathcal{X}^n\times \mathcal{X}^n\rightarrow \mathcal{X}^n\times \mathcal{X}^n,$ the transposition map that is $\tau:({\bf x},{\bf y})\mapsto ({\bf y},{\bf x}).$ In terms of $R:\mathcal{X}\times \mathcal{X}\rightarrow \mathcal{X}\times \mathcal{X},$ they respectively read:
\begin{align}\label{tij}
\begin{aligned}
{{}^{(n)}}t^{i,k,l}_{{\bf 12}}=&\circ_{j=1}^{k}R_{i+k+l-1,i+n+j-1},\\
{{}^{(n)}}t^{i,k,l}_{{\bf 13}}=&\circ_{j=1}^{k}R_{i+k+l-1,i+2n+j-1},\\
{{}^{(n)}}t^{i,k,l}_{{\bf 23}}=&\circ_{j=1}^{k}R_{i+n+k+l-1,i+2n+j-1},
\end{aligned}
\end{align}
and
\begin{align}\label{T12}
{{}^{(n)}}T^{i,k,l}_{{\bf 12}}=&\circ_{m=0}^{l-1}{{}^{(n)}}t^{i,k,m}_{{\bf 12}}=\circ_{m=0}^{l-1}\circ_{j=1}^{k}R_{i+k+m-1,i+n+j-1},\\ \label{T13}
{{}^{(n)}}T^{i,k,l}_{{\bf 13}}=&\circ_{m=0}^{l-1}{{}^{(n)}}t^{i,k,m}_{{\bf 13}}=\circ_{m=0}^{l-1}\circ_{j=1}^{k}R_{i+k+m-1,i+2n+j-1},\\ \label{T23}
{{}^{(n)}}T^{i,k,l}_{{\bf 23}}=&\circ_{m=0}^{l-1}{{}^{(n)}}t^{i,k,m}_{{\bf 23}}=\circ_{m=0}^{l-1}\circ_{j=1}^{k}R_{i+n+k+m-1,i+2n+j-1}.
\end{align}
In the formulas above, the bolted subscript ${\bf 1}$ is considered modulo $n$, the  subscript ${\bf 2}$ is considered modulo $n$ with the agreement that we are starting with $n$ and not with $0$, and the subscript ${\bf 3}$ is considered modulo $n$ where we are starting with $2n.$ 

 We have the following Lemma

 \begin{lemma}\label{lemma1}
 Let $R:\mathcal{X}\times \mathcal{X}\rightarrow \mathcal{X}\times \mathcal{X}$ be a pentagon map. \\
 A. The families of maps (\ref{tt-map})
 \begin{align*}
 {{}^{(n)}}t^{i,k,l}:\mathcal{X}^n\times \mathcal{X}^n\rightarrow & \mathcal{X}^n\times \mathcal{X}^n,& \begin{aligned}
                                                                                                    n> & 1,\\
                                                                                                    i,k=&1,\ldots,n,\\
                                                                                                    l=&0,1,\ldots,n-1,
                                                                                                  \end{aligned} 
 \end{align*}
 satisfy:
 \begin{flalign*}
 &(I_A)& [{{}^{(n)}}t^{i,k,m}_{{\bf 12}},\;{{}^{(n)}}t^{i,k,m'}_{{\bf 23}}]=&0,&\begin{aligned} &  m\in \{0,1,\ldots, l-1\},\\
                                                                                        &    m'\in \{1,\ldots, l-1\},
                                                                                       \end{aligned}\\
&(I_B)& [{{}^{(n)}}t^{i,k,m}_{{\bf 12}},\;{{}^{(n)}}t^{i,k,m'}_{{\bf 13}}]=&0,&m\neq m',                                                                                       
 \end{flalign*}
 \begin{flalign*}
 &(II)& \;{{}^{(n)}}t^{i,k,0}_{{\bf 23}}\;R_{i+k+q-1,i+n+k-1}=& R_{i+k+q-1,i+n+k-1} \;{{}^{(n)}}t^{i,k,q}_{{\bf 13}}\; \;{{}^{(n)}}t^{i,k,0}_{{\bf 23}}, &q=&0,1,\ldots,l-1.
\end{flalign*}

 \begin{flalign*}
 &(III)&        {{}^{(n)}}t^{i,k,0}_{{\bf 23}}\; {{}^{(n)}}t^{i,k,q}_{{\bf 12}}=& {{}^{(n)}}t^{i,k,q}_{{\bf 12}}\; {{}^{(n)}}t^{i,k,q}_{{\bf 13}} \;{{}^{(n)}}t^{i,k,0}_{{\bf 23}},& q=0,1,\ldots, l-1.
       \end{flalign*}
  B. The families  ${{}^{(n)}}t^{i,k,0},$ together with  the families of maps (\ref{TT-map})
 \begin{align*}
 {{}^{(n)}}T^{i,k,l}:\mathcal{X}^n\times \mathcal{X}^n\rightarrow & \mathcal{X}^n\times \mathcal{X}^n,& \begin{aligned}
                                                                                                    n> & 1,\\
                                                                                                    i,l=&1,\ldots,n,\\
                                                                                                    k=&1,\ldots,n-l+1.
                                                                                                  \end{aligned} 
 \end{align*}
 satisfy:
 \begin{flalign*}
  &(IV)&       {{}^{(n)}}t^{i,k,0}_{{\bf 23}}\; {{}^{(n)}}T^{i,k,l}_{{\bf 12}}=& {{}^{(n)}}T^{i,k,l}_{{\bf 12}}\; {{}^{(n)}}T^{i,k,l}_{{\bf 13}} \;{{}^{(n)}}t^{i,k,0}_{{\bf 23}},& 
       \end{flalign*}

 \end{lemma}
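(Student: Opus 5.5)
The plan is to reduce every identity to the pentagon equation for $R$ together with the trivial commutation of factors $R_{a,b}$, $R_{c,d}$ acting on disjoint pairs of slots. All maps in the lemma are products of factors $R_{p,q}$, with $p$ in block ${\bf 1}$ or ${\bf 2}$ and $q$ in block ${\bf 2}$ or ${\bf 3}$, as read off from (\ref{tij})--(\ref{T23}). Throughout, fix $i,k$ and write $p_m:=i+k+m-1$ (mod $n$ in block ${\bf 1}$). Write $q_j:=i+n+j-1$ in block ${\bf 2}$ and $r_j:=i+2n+j-1$ in block ${\bf 3}$. Also write $p'_m$ for the copy of $p_m$ in block ${\bf 2}$. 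Then
\[
t^{m}_{\bf 12}=\textstyle\circ_{j=1}^k R_{p_m,q_j},\qquad t^{m}_{\bf 13}=\circ_{j=1}^k R_{p_m,r_j},\qquad t^{0}_{\bf 23}=\circ_{j=1}^k R_{p'_0,r_j}.
\]
Because $R$ is a pentagon map, for distinct slots $a,b,c$ we have $R_{ab}R_{ac}R_{bc}=R_{bc}R_{ab}$. Equivalently,
\[
R_{bc}R_{ab}=R_{ab}R_{ac}R_{bc}.
\]
This is the only nontrivial relation I will use.

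For $(I_A)$ and $(I_B)$ I will check supports. In $(I_B)$, $t^m_{\bf 12}$ uses first slots $p_m$ and $t^{m'}_{\bf 13}$ uses $p_{m'}\neq p_m$ when $m\neq m'$. The second slots lie in the disjoint blocks ${\bf 2}$ and ${\bf 3}$. So all factors act on disjoint slots and commute. For $(I_A)$, $t^{m'}_{\bf 23}$ touches the block-${\bf 2}$ slot $p'_{m'}$ and block ${\bf 3}$. Meanwhile $t^m_{\bf 12}$ touches block ${\bf 2}$ only in the slots $q_1,\dots,q_k$. I will verify that, within the index ranges $k\le n-l+1$ and $m'\ge1$, the slot $p'_{m'}=i+n+k+m'-1$ avoids $\{q_1,\dots,q_k\}=\{i+n,\dots,i+n+k-1\}$ modulo $n$. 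This holds because $k\le k+m'-1\le n-1$. The range restriction is exactly what makes this work, and I will state it explicitly.

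The core identity is $(III)$. I expand $t^0_{\bf 23}\,t^q_{\bf 12}=\prod_j R_{p'_0,r_j}\prod_j R_{p_q,q_j}$. The factor $R_{p_q,q_j}$ commutes with every $R_{p'_0,r_{j'}}$ unless $q_j=p'_0$, which happens only for $j=k$. Hence I move all $R_{p_q,q_j}$ with $j<k$ to the left. Only the pair $R_{p'_0,r_{j'}}\cdots R_{p_q,q_k}$ remains, and here $q_k=p'_0$ is the shared slot. To this pair I apply the pentagon relation one $r_j$ at a time, with $a=p_q$, $b=p'_0$, $c=r_j$. Each application turns $R_{b,r_j}R_{a,b}$ into $R_{a,b}R_{a,r_j}R_{b,r_j}$. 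Performing this inductively from $j=k$ down to $j=1$ pushes $R_{p_q,q_k}$ to the left. It leaves behind the string $\prod_j R_{p_q,r_j}=t^q_{\bf 13}$, followed by $t^0_{\bf 23}$. Identity $(II)$ is exactly this single-factor statement ($R_{p_q,p'_0}$ passed through $t^0_{\bf 23}$), so I prove $(II)$ first as the inductive lemma and then deduce $(III)$.

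The main obstacle is the ordering bookkeeping. The generated factors $R_{p_q,r_j}$ must be collected in the right order past the remaining $R_{p'_0,r_{j'}}$, and these share slot $r_j$ only when $j=j'$. I expect to handle this by an induction on $k$, peeling off the last factor. Finally, $(IV)$ follows by induction on $l$. Write $T^{l}_{\bf 12}=T^{l-1}_{\bf 12}t^{l-1}_{\bf 12}$. Apply the induction hypothesis to get $T^{l-1}_{\bf 12}T^{l-1}_{\bf 13}\,t^0_{\bf 23}\,t^{l-1}_{\bf 12}$, and then apply $(III)$ with $q=l-1$. This yields $T^{l-1}_{\bf 12}T^{l-1}_{\bf 13}t^{l-1}_{\bf 12}t^{l-1}_{\bf 13}t^0_{\bf 23}$. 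The last step uses $(I_B)$ to move $t^{l-1}_{\bf 12}$ left past the factors $t^{m}_{\bf 13}$ with $m<l-1$, which gives $T^l_{\bf 12}T^l_{\bf 13}t^0_{\bf 23}$.
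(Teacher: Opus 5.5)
Your proposal is correct and follows the paper's proof essentially step for step. Like the paper, you use disjoint-support checks for $(I_A)$ and $(I_B)$, and for $(II)$ repeated use of $R_{bc}R_{ab}=R_{ab}R_{ac}R_{bc}$ followed by reordering via disjoint commutations. For $(III)$ you commute away the $j<k$ factors and invoke $(II)$, and for $(IV)$ you iterate $(III)$ and sort with $(I_B)$, which you organize as an induction on $l$. Your support check for $(I_A)$ is more careful than the paper's, which ignores the mod-$n$ wraparound. The condition $k+m'\le n$, guaranteed by $k\le n-l+1$, is genuinely needed: for $n=3$, $k=3$, $l=2$ the two maps share a block-${\bf 2}$ slot and need not commute.
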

 \begin{proof}
 The proof of the Lemma is given in Appendix \ref{app1}.
 \end{proof}
 Note that due to items $(I_A), (I_B)$ and $(II)$ of  Lemma \ref{lemma1},  the family of maps ${{}^{(n)}}t^{i,k,l}$ resemble  the {\em transfer maps} associated with Yang-Baxter maps \cite{Veselov:20031},  so we refer to them as {\em transfer like maps} associated with the pentagon map $R$. Item $(II)$ says that mapping $R$ and specific members of the families ${{}^{(n)}}t^{i,k,l}$ satisfy the entwining pentagon equation, while in item $(III)$ we have that 
 ${{}^{(n)}}t^{i,k,0}$ entwines with ${{}^{(n)}}t^{i,k,q},$ $q=0,\ldots, l-1.$ Furthermore,  item $(IV)$ says that ${{}^{(n)}}t^{i,k,0}$ and  ${{}^{(n)}}T^{i,k,l},$  satisfy the entwining pentagon equation.
 
The following  Theorem   allow us to construct families of multicomponent pentagon maps from a given one-component  pentagon  map.
 \begin{theorem}\label{theo2}
 Let $R:\mathcal{X}\times \mathcal{X}\rightarrow \mathcal{X}\times \mathcal{X}$ be a pentagon map. Then the   maps  (\ref{TT-map})
 \begin{align*}
 {{}^{(n)}}T^{i,k,l}:\mathcal{X}^n\times \mathcal{X}^n\rightarrow & \mathcal{X}^n\times \mathcal{X}^n,& \begin{aligned}
                                                                                                    n> & 1,\\
                                                                                                    i,l=&1,\ldots,n,\\
                                                                                                    k=&1,\ldots,n-l+1.
                                                                                                  \end{aligned} 
 \end{align*}
are families of pentagon maps, that is they satisfy
 \begin{align}\label{the1_2}
         {{}^{(n)}}T^{i,k,l}_{{\bf 23}}\; {{}^{(n)}}T^{i,k,l}_{{\bf 12}}=& {{}^{(n)}}T^{i,k,l}_{{\bf 12}}\; {{}^{(n)}}T^{i,k,l}_{{\bf 13}} \;{{}^{(n)}}T^{i,k,l}_{{\bf 23}}.
       \end{align}
 \end{theorem}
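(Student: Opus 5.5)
Theorem \ref{theo2} will follow from Lemma \ref{lemma1} by an induction on $l$, with $i,k$ fixed throughout. To lighten notation we drop the superscripts $(n)$, $i$, $k$ and write $t^m_{\bf st}:={}^{(n)}t^{i,k,m}_{\bf st}$ and $T^l_{\bf st}:={}^{(n)}T^{i,k,l}_{\bf st}=t^0_{\bf st}t^1_{\bf st}\cdots t^{l-1}_{\bf st}$. The goal (\ref{the1_2}) is $T^l_{\bf 23}T^l_{\bf 12}=T^l_{\bf 12}T^l_{\bf 13}T^l_{\bf 23}$.

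We begin with $T^l_{\bf 23}=t^0_{\bf 23}\,t^1_{\bf 23}\cdots t^{l-1}_{\bf 23}$ in the left hand side. Item $(I_A)$ says that $t^{m'}_{\bf 23}$ with $m'\geq 1$ commutes with every $t^m_{\bf 12}$. So we can push the factors $t^1_{\bf 23}\cdots t^{l-1}_{\bf 23}$ through $T^l_{\bf 12}$, leaving only $t^0_{\bf 23}$ to the left of it:
\begin{align*}
T^l_{\bf 23}T^l_{\bf 12}=t^0_{\bf 23}\,T^l_{\bf 12}\,t^1_{\bf 23}\cdots t^{l-1}_{\bf 23}.
\end{align*}
Item $(IV)$ then gives $t^0_{\bf 23}T^l_{\bf 12}=T^l_{\bf 12}T^l_{\bf 13}t^0_{\bf 23}$. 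Hence
\begin{align*}
T^l_{\bf 23}T^l_{\bf 12}=T^l_{\bf 12}\,T^l_{\bf 13}\,t^0_{\bf 23}t^1_{\bf 23}\cdots t^{l-1}_{\bf 23}=T^l_{\bf 12}T^l_{\bf 13}T^l_{\bf 23},
\end{align*}
which is (\ref{the1_2}).

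The argument therefore reduces entirely to Lemma \ref{lemma1}, and the delicate point is making sure the index ranges in $(I_A)$ and $(IV)$ match what is used. In $(I_A)$ we need $m\in\{0,\dots,l-1\}$ and $m'\in\{1,\dots,l-1\}$, which is exactly the stated range. For $l=1$ there is nothing to commute, and the claim is $(IV)$, equivalently $(III)$ with $q=0$. Item $(IV)$ is only stated for $k\le n-l+1$. This is precisely the range in the theorem, so no extension is needed. The commutation in $(I_A)$ should also hold at the level of $R$: the factors of $t^{m'}_{\bf 23}$ act on positions in blocks $\bf 2,3$ with first index $i+n+k+m'-1$, while $t^m_{\bf 12}$ uses block-$\bf 2$ positions $i+n+j-1$. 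Disjointness then needs $k+m'-1\ge k$, i.e.\ $m'\ge 1$, modulo $n$, and uses $k+l-1\le n$; this is why the constraint on $k$ appears.

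If one wanted a self-contained proof of $(IV)$ rather than quoting it, the plan would be an induction on $l$: write $T^{l}_{\bf 12}=T^{l-1}_{\bf 12}t^{l-1}_{\bf 12}$ and apply $(III)$ with $q=l-1$. Then use $(I_B)$ to reorder the $t^m_{\bf 13}$ past $t^{l-1}_{\bf 12}$, reassembling $T^{l-1}_{\bf 13}t^{l-1}_{\bf 13}=T^l_{\bf 13}$. The main obstacle there is the bookkeeping of the reordering of $\bf 13$ and $\bf 12$ factors, which $(I_B)$ handles since the relevant indices differ ($m\ne m'$).
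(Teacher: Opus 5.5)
Your proof is correct and follows the paper's route exactly: use $(I_A)$ to move $t^1_{\bf 23}\cdots t^{l-1}_{\bf 23}$ to the right of $T^l_{\bf 12}$, then apply $(IV)$ to $t^0_{\bf 23}T^l_{\bf 12}$. Your optional derivation of $(IV)$ from $(III)$ and $(I_B)$ also matches the appendix. Two side remarks: the opening reference to an induction on $l$ is superfluous, since the main argument is direct, and your observation that the commutation in $(I_A)$ actually needs $k+l-1\le n$ (working modulo $n$) is a correct sharpening of index bookkeeping that the paper glosses over.
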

 \begin{proof}
 From the definitions (\ref{T12}) and (\ref{T23}) of ${{}^{(n)}}T^{i,k,l}_{{\bf 12}}$ and ${{}^{(n)}}T^{i,k,l}_{{\bf 23}},$ there is 
 \begin{align*}
   \begin{aligned}
   {{}^{(n)}}T^{i,k,l}_{{\bf 23}}\; {{}^{(n)}}T^{i,k,l}_{{\bf 12}}=& {{}^{(n)}}t^{i,k,0}_{{\bf 23}}\; {{}^{(n)}}t^{i,k,1}_{{\bf 23}}\ldots {{}^{(n)}}t^{i,k,l-1}_{{\bf 23}}\; {{}^{(n)}}t^{i,k,0}_{{\bf 12}}\; {{}^{(n)}}t^{i,k,1}_{{\bf 12}}\ldots {{}^{(n)}}t^{i,k,l-1}_{{\bf 12}}\\
   =&{{}^{(n)}}t^{i,k,0}_{{\bf 23}}\; {{}^{(n)}}t^{i,k,0}_{{\bf 12}}\; \; {{}^{(n)}}t^{i,k,1}_{{\bf 12}}\ldots {{}^{(n)}}t^{i,k,l-1}_{{\bf 12}}\; {{}^{(n)}}t^{i,k,1}_{{\bf 23}}\; {{}^{(n)}}t^{i,k,2}_{{\bf 23}}\ldots {{}^{(n)}}t^{i,k,l-1}_{{\bf 23}}\\
   =&{{}^{(n)}}t^{i,k,0}_{{\bf 23}}\; {{}^{(n)}}T^{i,k,l}_{{\bf 12}}\; {{}^{(n)}}t^{i,k,1}_{{\bf 23}}\; {{}^{(n)}}t^{i,k,2}_{{\bf 23}}\ldots {{}^{(n)}}t^{i,k,l-1}_{{\bf 23}},
   \end{aligned}
 \end{align*}
 where we used item $(I_A)$ of Lemma \ref{lemma1}. Now by using item $(IV)$ of the same Lemma we obtain
 \begin{align*}
   \begin{aligned}
   {{}^{(n)}}T^{i,k,l}_{{\bf 23}}\; {{}^{(n)}}T^{i,k,l}_{{\bf 12}}=& 
   {{}^{(n)}}t^{i,k,0}_{{\bf 23}}\; {{}^{(n)}}T^{i,k,l}_{{\bf 12}}\; {{}^{(n)}}t^{i,k,1}_{{\bf 23}}\; {{}^{(n)}}t^{i,k,2}_{{\bf 23}}\ldots {{}^{(n)}}t^{i,k,l-1}_{{\bf 23}}\\
   =& {{}^{(n)}}T^{i,k,l}_{{\bf 12}}\; {{}^{(n)}}T^{i,k,l}_{{\bf 13}} \;{{}^{(n)}}t^{i,k,0}_{{\bf 23}}\; {{}^{(n)}}t^{i,k,1}_{{\bf 23}}\; \ldots \; {{}^{(n)}}t^{i,k,l-1}_{{\bf 23}}\\
   =& {{}^{(n)}}T^{i,k,l}_{{\bf 12}}\; {{}^{(n)}}T^{i,k,l}_{{\bf 13}} \; {{}^{(n)}}T^{i,k,l}_{{\bf 12}},   
    \end{aligned}
 \end{align*}
 and that completes the proof. 
 \end{proof}

\section*{Acknowledgements}
This paper has been financed by the funding programme ``MEDICUS", of the University of Patras.

We would like to  thank the organisers of the conference {\em Discrete Integrable Systems}, (6-10  April 2026, TSIMF, Sanya, China) for the invitation and  the warm hospitality. This article was finalized during our stay.

\appendix

\section{Proof of Lemma \ref{lemma1}}\label{app1}

Let us first prove part A. of the Lemma. In order to prove item $(I_A)$, we have to show that only when $m'\neq 0,$ the maps ${{}^{(n)}}t^{i,k,m}_{{\bf 12}},$ $m\in \{0,1,\ldots, l-1\},$ and the maps $\;{{}^{(n)}}t^{i,k,m'}_{{\bf 23}},$  $m'\in \{1,\ldots, l-1\},$ act non-trivially on different  factors  of  $\mathcal{X}^n\times \mathcal{X}^n\times \mathcal{X}^n.$ From (\ref{tij}), we have \begin{align*}
 {{}^{(n)}}t^{i,k,m}_{{\bf 12}}=&\circ_{j=1}^{k}R_{i+k+m-1,i+n+j-1},&
{{}^{(n)}}t^{i,k,m'}_{{\bf 23}}=&\circ_{j=1}^{k}R_{i+n+k+m'-1,i+2n+j-1},
\end{align*}
which in order to act non-trivially on different  factors  of  $\mathcal{X}^n\times \mathcal{X}^n\times \mathcal{X}^n,$ it should be
\begin{align*}
i+n+j-1\neq & i+n+k+m'-1,& \forall j\in & \{1,\ldots,k\},
\end{align*}
or equivalently
\begin{align*}
j\neq & k+m',& \forall j\in & \{1,\ldots,k\}, 
\end{align*}
that holds only for $m'\neq 0.$ In a similar way we can show that $(I_B)$ holds.

To prove item $(II)$, note that there is
\begin{align}\label{eqp1} 
\begin{aligned}
\;{{}^{(n)}}t^{i,k,0}_{{\bf 23}}\;R_{i+k+q-1,i+n+k-1}=& (\circ_{j=1}^{k}R_{i+n+k-1,i+2n+j-1}) \;R_{i+k+q-1,i+n+k-1}\\
&= R_{i+k+q-1,i+n+k-1} \circ_{j=1}^{k}R_{i+k+q-1,i+2n+j-1} R_{i+n+k-1,i+2n+j-1},
\end{aligned}
\end{align}
where we used the definition of ${{}^{(n)}}t^{i,k,0}_{{\bf 23}}$ from (\ref{tij}) and the hypothesis that $R$ is a pentagon map,  so for any of the triples  $(i+k+q-1,i+n+k-1,i+2n+j-1),$ of subscripts, with $i=1,2,\ldots,n,$ $j=1,2,\ldots,k,$                                                                                                         $q=0,1,\ldots,l-1,$  it holds 
\begin{align*}
R_{i+n+k-1,i+2n+j-1} R_{i+k+q-1,i+n+k-1}=&R_{i+k+q-1,i+n+k-1} R_{i+k+q-1,i+2n+j-1}R_{i+n+k-1,i+2n+j-1}. 
\end{align*}
 Since for $q\neq n$ the pairs $(i+k+q-1,i+2n+j),(i+n+k-1,i+2n+j-1)$ are disjoint, there is
 \begin{align*}
 [R_{i+k+q-1,i+2n+j},R_{i+n+k-1,i+2n+j-1}]=&0, q\neq n,
 \end{align*}
 and the product of composition of maps $\circ_{j=1}^{k}R_{i+k+q-1,i+2n+j-1} R_{i+n+k-1,i+2n+j-1}$ that appears in (\ref{eqp1}), can be rewritten as
 \begin{align}\label{eqp2}
 \begin{aligned}
    \underbrace{R_{i+k+q-1,i+2n} \ldots R_{i+k+q-1,i+2n+k-1}}_{\circ_{j=1}^{k}R_{i+k+q-1,i+2n+j-1}}\underbrace{R_{i+n+k-1,i+2n} \ldots R_{i+n+k-1,i+2n+k-1}}_{\circ_{j=1}^{k}R_{i+n+k-1,i+2n+j-1}}& \\
    &= {{}^{(n)}}t^{i,k,q}_{{\bf 13}}\; {{}^{(n)}}t^{i,k,0}_{{\bf 23}}.
   \end{aligned}
 \end{align}  
  Replacing (\ref{eqp2}) in (\ref{eqp1}) we obtain item $(II)$.

  Now we prove item $(III)$. We have
  \begin{align*} 
\begin{aligned}
\;{{}^{(n)}}t^{i,k,0}_{{\bf 23}}\;\;{{}^{(n)}}t^{i,k,q}_{{\bf 12}}=& (\circ_{j=1}^{k}R_{i+n+k-1,i+2n+j-1}) \;\circ_{j=1}^{k}R_{i+q+k-1,i+n+j-1}\\
&= (\circ_{j=1}^{k}R_{i+n+k-1,i+2n+j-1}) \;(\circ_{j=1}^{k-1}R_{i+q+k-1,i+n+j-1})\;R_{i+q+k-1,i+n+k-1}.
\end{aligned}
\end{align*}
Since $q<n$ and $k>0$, there is $[\circ_{j=1}^{k}R_{i+n+k-1,i+2n+j-1}, \;\circ_{j=1}^{k-1}R_{i+q+k-1,i+n+j-1}]=0,$ so the formula above reads
 \begin{align*}
  \begin{aligned}
\;{{}^{(n)}}t^{i,k,0}_{{\bf 23}}\;\;{{}^{(n)}}t^{i,k,q}_{{\bf 12}}=& (\circ_{j=1}^{k-1}R_{i+q+k-1,i+n+j-1})\; (\circ_{j=1}^{k}R_{i+n+k-1,i+2n+j-1}) \;R_{i+q+k-1,i+n+k-1}\\
=&\circ_{j=1}^{k-1}R_{i+q+k-1,i+n+j-1}\;\underbrace{{{}^{(n)}}t^{i,k,0}_{{\bf 23}}\;R_{i+q+k-1,i+n+k-1}}_{\begin{aligned}=&R_{i+q+k-1,i+n+k-1} \;{{}^{(n)}}t^{i,k,q}_{{\bf 13}}\; \;{{}^{(n)}}t^{i,k,0}_{{\bf 23}} &\\ &\mbox{due to item  (II)}\end{aligned}}\\
=&\circ_{j=1}^{k}R_{i+q+k-1,i+n+j-1}\;\;{{}^{(n)}}t^{i,k,q}_{{\bf 13}}\; \;{{}^{(n)}}t^{i,k,0}_{{\bf 23}}
= {{}^{(n)}}t^{i,k,q}_{{\bf 12}}\;\;{{}^{(n)}}t^{i,k,q}_{{\bf 13}}\; \;{{}^{(n)}}t^{i,k,0}_{{\bf 23}},
\end{aligned}
\end{align*}
and that completes the proof.

For item $(IV)$ we have
\begin{align*} 
\begin{aligned}
\;{{}^{(n)}}t^{i,k,0}_{{\bf 23}}\;\;{{}^{(n)}}T^{i,k,l}_{{\bf 12}}=&\underbrace{{{}^{(n)}}t^{i,k,0}_{{\bf 23}}\;\;{{}^{(n)}}t^{i,k,0}_{{\bf 12}}\;{{}^{(n)}}}_{\begin{aligned}=&{{}^{(n)}}t^{i,k,0}_{{\bf 12}}\; {{}^{(n)}}t^{i,k,0}_{{\bf 13}} \;{{}^{(n)}}t^{i,k,0}_{{\bf 23}}\\
& \mbox{due to item (III)}\end{aligned}}\;t^{i,k,1}_{{\bf 12}}\ldots {{}^{(n)}}t^{i,k,l-1}_{{\bf 12}} \\
&= {{}^{(n)}}t^{i,k,0}_{{\bf 12}}\; {{}^{(n)}}t^{i,k,0}_{{\bf 13}} \;\underbrace{{{}^{(n)}}t^{i,k,0}_{{\bf 23}}\; t^{i,k,1}_{{\bf 12}}}_{\begin{aligned}=&{{}^{(n)}}t^{i,k,1}_{{\bf 12}}\; {{}^{(n)}}t^{i,k,1}_{{\bf 13}} \;{{}^{(n)}}t^{i,k,0}_{{\bf 23}}\\
& \mbox{due to item (III)}\end{aligned}}\ldots {{}^{(n)}}t^{i,k,l-1}_{{\bf 12}}.
\end{aligned}
\end{align*}
Repeating the process above $l-2$ times, we obtain
\begin{align*} 
\begin{aligned}
\;{{}^{(n)}}t^{i,k,0}_{{\bf 23}}\;\;{{}^{(n)}}T^{i,k,l}_{{\bf 12}}=& {{}^{(n)}}t^{i,k,0}_{{\bf 12}}\left(\circ_{j=0}^{l-2}\;  {{}^{(n)}}t^{i,k,j}_{{\bf 13}}\;  {{}^{(n)}}t^{i,k,j+1}_{{\bf 12}}\right)\;  {{}^{(n)}}t^{i,k,l-1}_{{\bf 13}}\;  {{}^{(n)}}t^{i,k,0}_{{\bf 23}}.
\end{aligned}
\end{align*}
Using item $(I_B)$ the formula above becomes
\begin{align*} 
\begin{aligned}
\;{{}^{(n)}}t^{i,k,0}_{{\bf 23}}\;\;{{}^{(n)}}T^{i,k,l}_{{\bf 12}}=& {{}^{(n)}}t^{i,k,0}_{{\bf 12}}\left(\circ_{j=0}^{l-2}\;   {{}^{(n)}}t^{i,k,j+1}_{{\bf 12}}\;  {{}^{(n)}}t^{i,k,j}_{{\bf 13}}\right)\;  {{}^{(n)}}t^{i,k,l-1}_{{\bf 13}}\;  {{}^{(n)}}t^{i,k,0}_{{\bf 23}}\\
&=\underbrace{{{}^{(n)}}t^{i,k,0}_{{\bf 12}}\; {{}^{(n)}}t^{i,k,1}_{{\bf 12}}\ldots {{}^{(n)}}t^{i,k,l-1}_{{\bf 12}}}_{={{}^{(n)}}T^{i,k,l}_{{\bf 12}}}\;
\underbrace{{{}^{(n)}}t^{i,k,0}_{{\bf 13}}\; {{}^{(n)}}t^{i,k,1}_{{\bf 13}}\ldots {{}^{(n)}}t^{i,k,l-1}_{{\bf 13}}}_{={{}^{(n)}}T^{i,k,l}_{{\bf 13}}}\;{{}^{(n)}}t^{i,k,0}_{{\bf 23}}\\
=&{{}^{(n)}}T^{i,k,l}_{{\bf 12}}\; {{}^{(n)}}T^{i,k,l}_{{\bf 13}}\; \;{{}^{(n)}}t^{i,k,0}_{{\bf 23}},
\end{aligned}
\end{align*}
that is exactly item (IV) of the Lemma.


\end{document}